\documentclass[11pt, a4paper]{article}

\usepackage[utf8]{inputenc}
\usepackage{amsmath, amssymb, amsthm}
\usepackage{mathtools}
\usepackage{graphicx}
\usepackage{booktabs}
\usepackage{hyperref}
\usepackage{xcolor}
\usepackage{listings}
\usepackage{algorithm}
\usepackage{algpseudocode}
\usepackage{multirow}
\usepackage{enumitem}
\usepackage{geometry}
\usepackage{caption}
\usepackage{subcaption}
\usepackage{tikz}
\usepackage{framed}

\hypersetup{colorlinks=true, linkcolor=black, citecolor=black, urlcolor=blue}

\newtheorem{theorem}{Theorem}[section]
\newtheorem{lemma}[theorem]{Lemma}
\newtheorem{corollary}[theorem]{Corollary}
\newtheorem{proposition}[theorem]{Proposition}
\newtheorem{definition}[theorem]{Definition}
\newtheorem{assumption}[theorem]{Assumption}

\definecolor{listbg}{gray}{0.97}
\lstdefinestyle{mono}{
  basicstyle=\ttfamily\footnotesize,
  backgroundcolor=\color{listbg},
  breaklines=true,
  breakatwhitespace=false,
  columns=fullflexible,
  frame=single,
  framesep=5pt,
  xleftmargin=4pt,
  showstringspaces=false,
}
\title{\textbf{NARAD: Non-colluding Aggregator-oblivious Record-And-Decrypt}\\[3pt]
{\large Blockchain-Verifiable Voting via Packed Paillier Encryption}}
\author{Akshit Vakati Venkata \quad Rajat Dugar \quad Ayush Adarsh \\ \textit{Indian Institute of Technology Madras}}
\date{}

\begin{document}
\maketitle

\begin{abstract}
Electronic voting must keep individual ballots private while letting anyone verify the final tally. This paper presents an architecture that meets both goals without a trusted key dealer: each voter encrypts a ballot in the browser with a self-generated secret key under the Paillier additive homomorphic cryptosystem, and no party ever holds every key. Two server roles divide the tally. A \emph{collector} combines the voters' per-ballot auxiliary values into a single group element; an \emph{aggregator} uses that element to cancel the voters' random masks inside the homomorphic product and recover the exact vote sum, learning the result but no individual ballot. The Solana blockchain records every ciphertext immutably and enforces the election lifecycle, while a native C library (libtommath) performs the heavy modular arithmetic. The paper states six assumptions under which the protocol is correct and prove product homomorphism, mask cancellation, and sum recovery; privacy rests on the assumptions standard for aggregator-oblivious encryption: the Decisional Composite Residuosity (DCR) assumption for the additive layer together with a Diffie--Hellman-style assumption on the masking base. A bit-packing scheme places an entire multi-candidate ballot in one ciphertext, cutting client work, on-chain transactions, storage, and tally cost by a factor of $k$ (the candidate count); the slot width $b$ is free, with only the product $k \cdot b$ bounded by $\log_2 N$. With $b = 25$ and a 255-bit modulus the scheme supports ten candidates and up to $2^{25}-1 = 33\,554\,431$ votes per candidate, about 335 million ballots in total, and the proof-of-concept tallies 50{,}000 ballots exactly in under one second on its real parameters. Finally, we show how running the collector and aggregator inside attested secure enclaves makes the tally tamper-resistant and prevents a single host from colluding across the two roles to deanonymize voters. The proof-of-concept implementation is open-source.\footnote{\url{https://github.com/Akshit11318/narad}} A worked numerical example in the appendix reproduces the full pipeline on those same parameters.
\end{abstract}

\medskip
\noindent\textbf{Keywords:} Paillier cryptosystem, homomorphic encryption, electronic voting, aggregator-oblivious, blockchain, Solana, WebAssembly, libtommath, vote packing, DCR assumption

\section{Introduction}\label{sec:intro}

Electronic voting must satisfy two requirements that usually pull against each other: \emph{ballot privacy} (no party learns how an individual voted) and \emph{tally verifiability} (the published result is publicly checkable). Mix-net systems give privacy but need trusted authorities to shuffle and decrypt. Additive homomorphic encryption gives a cleaner path: the tally is computed directly on ciphertexts, so no individual ballot is ever decrypted.

\subsection{The aggregator-oblivious model}

In the aggregator-oblivious model~\cite{lem14b}, voters encrypt their data with self-generated secret keys and no coordination with a key dealer. No single party holds every secret key. The aggregator computes the sum of the encrypted values from auxiliary information supplied by a collector, but cannot decrypt any single ciphertext.

The mechanism rests on one structural choice: each voter's secret key $sk_i$ plays two roles at once. It is the encryption randomness $r_i$ of the masking term, and it is the exponent of the auxiliary value $aux_i = pk_A^{sk_i}$. When the aggregator raises the ciphertext product to its own key $sk_A$ and divides by the collected auxiliary product, the random masks cancel exactly and only the encrypted sum remains.

\subsection{Contributions}

\begin{enumerate}[leftmargin=*]
    \item A system model and architecture that combine aggregator-oblivious Paillier encryption with blockchain immutability: clients encrypt in the browser through WebAssembly, and aggregation runs in a native C library (Sections~\ref{sec:model} and~\ref{sec:implementation}).
    \item Six explicit assumptions and complete correctness proofs for the aggregation pipeline, covering product homomorphism, mask cancellation, and sum recovery (Sections~\ref{sec:protocol}--\ref{sec:proofs}).
    \item A privacy analysis under the DCR assumption, showing that the aggregator, collector, and network observer each learn nothing about individual ballots (Section~\ref{sec:security}).
    \item A bit-packing scheme that fits a multi-candidate ballot into one ciphertext, with a tunable slot width $b$, capacity bounds, and a quantified account of why packing matters (Section~\ref{sec:packing}).
    \item A proof-of-concept measured on its own real parameters, with multi-scale results and a worked numerical example computed on those parameters (Sections~\ref{sec:implementation}--\ref{sec:evaluation}, Appendix~\ref{sec:example}).
    \item A discussion of how the design scales to more or fewer candidates and voters, how on-chain cost trades against capacity, how preferential voting fits the packing model, and how secure enclaves can make the tally tamper-resistant (Section~\ref{sec:future}).
\end{enumerate}

\subsection{Outline}

Section~\ref{sec:prelim} recalls the Paillier primitives and fixes notation. Section~\ref{sec:related} positions our work against prior systems and argues why the aggregator-oblivious model is preferable to a trusted key dealer in an enclave. Section~\ref{sec:guarantees} states the guarantees and their assumptions upfront. Section~\ref{sec:model} describes the parties and trust model. Section~\ref{sec:protocol} states the protocol and its assumptions; Section~\ref{sec:proofs} proves correctness; Section~\ref{sec:security} analyzes security; Section~\ref{sec:packing} covers vote packing; Section~\ref{sec:algorithms} gives the algorithms; Section~\ref{sec:implementation} describes the implementation and its architecture on real parameters; and Section~\ref{sec:evaluation} reports measurements. Section~\ref{sec:future} discusses scaling, threshold aggregation, and future work, and Section~\ref{sec:conclusion} concludes. Appendix~\ref{sec:example} works a full example end to end.

\section{Preliminaries}\label{sec:prelim}

\subsection{Paillier encoding and the \texorpdfstring{$L$}{L}-function}

We use the additive structure of the Paillier cryptosystem~\cite{paillier1999public}. For a modulus $N = pq$ with $p, q$ distinct odd primes, and the base $1 + N$, the binomial theorem gives, for any non-negative integer $m$,
\begin{equation}
(1 + N)^m = \sum_{t \geq 0} \binom{m}{t} N^t \equiv 1 + mN \pmod{N^2}, \label{eq:binom}
\end{equation}
since every term with $t \geq 2$ is divisible by $N^2$. Encoding is additively homomorphic: $(1+N)^{m_1}(1+N)^{m_2} \equiv (1+N)^{m_1+m_2} \pmod{N^2}$, so multiplying encodings adds plaintexts modulo $N$.

To hide a plaintext, the encoding is multiplied by a random element of $\mathbb{Z}_{N^2}^*$. In this work that mask is $H^{r}$ for a public base $H \in \mathbb{Z}_{N^2}^*$ and a secret exponent $r$, so a ciphertext is $c = H^{r}\,(1+N)^m \bmod N^2$.

The plaintext is recovered with the $L$-function
\begin{equation}
L(u) = \frac{u - 1}{N}, \qquad u \equiv 1 \pmod N,
\end{equation}
where the division is exact over the integers. For $m < N$, Equation~\ref{eq:binom} gives $(1+N)^m \bmod N^2 = 1 + mN$, hence $L(1 + mN) = m$. The protocol of Section~\ref{sec:protocol} removes the mask before applying $L$.

\subsection{Hardness assumptions}

\begin{definition}[Decisional Composite Residuosity]\label{def:dcr}
Given $N = pq$ and $z \in \mathbb{Z}_{N^2}^*$, it is computationally infeasible to decide whether $z$ is an $N$-th residue modulo $N^2$ (that is, whether $\exists\, y \in \mathbb{Z}_{N^2}^*$ with $z \equiv y^N \pmod{N^2}$) or a uniformly random element of $\mathbb{Z}_{N^2}^*$.
\end{definition}

Privacy of the construction relies on two assumptions, both standard for aggregator-oblivious encryption~\cite{shi2011, joye2013, lem14b}. First, the additive Paillier layer $(1+N)^{m}$ is semantically secure under DCR (Definition~\ref{def:dcr})~\cite{paillier1999public}. Second, the masking term $H^{sk_i}$, for a secret exponent $sk_i$ and the public base $H$, is pseudorandom in the subgroup $\langle H\rangle$ it generates, a Diffie--Hellman-style assumption on $H$. Note that the mask lies in $\langle H\rangle$, not in all of $\mathbb{Z}_{N^2}^*$, so we do \emph{not} claim a ciphertext is indistinguishable from a uniform group element. We do not reprove the scheme; its privacy under these two assumptions is established in~\cite{shi2011, joye2013, lem14b}, and Section~\ref{sec:security} states the guarantees our deployment relies on.

\subsection{Notation}

Table~\ref{tab:notation} collects the symbols used throughout.

\begin{table}[h]
\centering
\caption{Notation.}
\label{tab:notation}
\begin{tabular}{ll}
\toprule
Symbol & Meaning \\
\midrule
$N = pq$ & Paillier modulus; $p, q$ distinct odd primes \\
$N^2$ & ciphertext modulus, defining the group $\mathbb{Z}_{N^2}^*$ \\
$H$ & public base element of $\mathbb{Z}_{N^2}^*$ \\
$sk_A,\ pk_A$ & aggregator secret/public key, $pk_A = H^{sk_A} \bmod N^2$ \\
$sk_i$ & voter $i$'s secret key, reused as the masking randomness $r_i$ \\
$x_i$ & voter $i$'s packed ballot (plaintext), $x_i < N$ \\
$c_i$ & voter $i$'s ciphertext $H^{sk_i}(1+N)^{x_i} \bmod N^2$ \\
$aux_i$ & voter $i$'s auxiliary value $pk_A^{sk_i} \bmod N^2$ \\
$L(\cdot)$ & $L$-function $L(u) = (u-1)/N$ \\
$n$ & number of voters \\
$k$ & number of candidates \\
$b$ & bits per packing slot (implementation uses $b = 25$) \\
$v_{i,j}$ & voter $i$'s entry for candidate $j$; $v_{i,j} \in \{0,1\}$ \\
$S$ & recovered sum $\sum_{i} x_i$ \\
$\Sigma$ & sum of secret keys $\sum_{i} sk_i$ \\
$\Pi$ & ciphertext product $\prod_i c_i \bmod N^2$ \\
$P,\ P'$ & tally intermediates $\Pi^{sk_A}$ and $P\cdot aux^{-1} \bmod N^2$ \\
\bottomrule
\end{tabular}
\end{table}

\section{Related Work and Positioning}\label{sec:related}

We position our work against the main families of secure voting and aggregation, and argue why the aggregator-oblivious model is preferable to simpler alternatives.

\paragraph{Helios~\cite{adida2008helios}} is a web-based open-audit voting system using homomorphic encryption. It relies on a \emph{single trusted authority} to hold the decryption key and compute the tally. If that authority is compromised, all ballot privacy is lost. Our design eliminates this trust bottleneck: no single party holds every secret key, and the aggregator learns only the sum, not individual ballots. Helios also does not use a blockchain, so its audit trail depends on the honesty of the server that stores ciphertexts; we record every ciphertext on Solana, giving an immutable record that no party can alter.

\paragraph{Civitas~\cite{kiayias2008civitas}} extends Helios with coercion resistance using distributed plaintext-equivalence tests and enforced eligibility. It requires a set of \emph{supervisor} authorities that jointly manage decryption, a threshold scheme, but one that still concentrates trust in a small committee. Our aggregator-oblivious model goes further: \emph{each voter} is their own key dealer, so there is no committee to corrupt. The price we pay is the absence of coercion resistance, which Civitas handles through receipt-freeness and forced-abstention protections. We view coercion resistance as orthogonal and complementary.

\paragraph{Aggregator-oblivious encryption.} Shi et al.~\cite{shi2011} introduced privacy-preserving aggregation in which users encrypt with self-generated keys; Joye and Libert~\cite{joye2013} gave a scalable Paillier-based scheme, and Leontiadis et al.~\cite{lem14b} relaxed its trust by removing the trusted dealer and periodic key updates. These works are theoretical: they define protocols and prove security but do not address browser-side encryption, multi-candidate packing, blockchain audit trails, or native-code performance. Our contribution is to take this line of protocols from theory to a measured system, with bit-packing for multi-candidate ballots, WebAssembly for client-side encryption, a native C library for server-side aggregation, and a Solana program for immutable recording.

\paragraph{Secure aggregation in federated learning.} Mansouri et al.~\cite{popets2023} systematize cryptographic secure-aggregation schemes for federated learning. Turbo-Aggregate~\cite{turbo2021} reaches $O(N \log N)$ aggregation via secret sharing and coding, tolerating dropouts, a property voting does not need (a dropped ballot is not counted) but that complicates the protocol. These systems target model gradients, not binary ballots, and do not use blockchain for input integrity. Our protocol is simpler (no secret sharing, no coding), targets a different application (voting), and adds on-chain immutability.

\paragraph{Why not run a trusted key dealer inside a secure enclave?}
A natural alternative to aggregator-oblivious encryption is to keep a conventional trusted-key-dealer Paillier scheme but run the dealer (and the tallying authority) inside an attested secure enclave such as Intel SGX or AWS Nitro Enclaves. The enclave would hold the factorization of $N$ or the decryption key, compute the tally, and attest to the result. This is simpler: one party, one key, and removes the need for the collector/aggregator split entirely.

We do not pursue this route for three reasons. \emph{First}, it concentrates all cryptographic trust in the enclave's correctness and the hardware manufacturer's honesty. SGX has suffered practical side-channel attacks~\cite{sgaxe2020, lvi2020} that can extract keys from ``isolated'' enclaves; a single successful attack on the dealer enclave compromises every ballot in the election, not only one tally. In our design, by contrast, a compromised aggregator learns only the sum. Individual ballots remain protected by DCR even if $sk_A$ leaks. \emph{Second}, the aggregator-oblivious model gives a cryptographic privacy guarantee that does not depend on hardware at all: privacy rests on DCR, a number-theoretic assumption, not on the physical integrity of a chip. Enclaves can \emph{strengthen} integrity (Section~\ref{sec:enclave}) but they are not the \emph{basis} of privacy. \emph{Third}, the key-dealer model requires the dealer to be online during tallying, creating an availability bottleneck; our aggregator needs only the collected $aux$, which is a single group element that can be computed and forwarded by anyone who receives the $aux_i$ values.

In short, enclaves are a good \emph{addition} to our design (making the tally tamper-resistant) but a poor \emph{replacement} for the cryptographic separation of duties that aggregator-oblivious encryption provides.

\paragraph{Zero-knowledge proofs for ballot validity.}
A limitation of our current design is the lack of cryptographic enforcement of Assumption~\ref{ass:binary} ($v_{i,j} \in \{0,1\}$). A malicious voter could submit a ciphertext encoding an out-of-range value. The standard remedy is a zero-knowledge range proof that the committed value lies in $\{0,1\}$, proven alongside the ciphertext without revealing the vote. Pedersen commitments~\cite{pedersen1991} provide the hiding commitment $C = g^v h^r \bmod p$; a Schnorr-style proof of knowledge~\cite{schnorr1991} then demonstrates that the prover knows $v$ and $r$ consistent with $C$; and a range proof based on the constraint $v(v-1) = 0$ (equivalently, $v \in \{0,1\}$) can be constructed via Bulletproofs~\cite{bunz2018bulletproofs} or the simpler $\Sigma$-protocol of Camenisch and Stadler~\cite{camenisch1997}. A sum proof (that exactly one candidate is selected, $\sum_j v_{i,j} = 1$) can be built from a Schnorr proof on the aggregate commitment. Our codebase includes WASM-backed implementations of these primitives (modular exponentiation, secure hashing, Fiat--Shamir challenges) but has not yet integrated them into the voting flow; this is the most important piece of future work.

\section{Guarantees and Assumptions Summary}\label{sec:guarantees}

Before diving into the protocol, we state upfront what the system guarantees and under what conditions. Each guarantee names the assumption or proof that backs it.

\subsection{Guarantees provided}

\begin{itemize}[leftmargin=*]
    \item \textbf{Ballot privacy (individual).} No party (aggregator, collector, or network observer) learns how an individual voter voted. \emph{Reason:} Each ciphertext $c_i = H^{sk_i}(1+N)^{x_i} \bmod N^2$ is semantically secure under DCR (Definition~\ref{def:dcr}) and the masking assumption on $H$ (Section~\ref{sec:prelim}). The secret key $sk_i$ never leaves the voter's browser. (Proposition~\ref{prop:privacy}.)
    
    \item \textbf{Aggregator obliviousness.} The aggregator learns only the tally $S = \sum x_i$, not any individual $x_i$. \emph{Reason:} The aggregator sees $\{c_i\}$ and the product $aux = \prod aux_i$ but never an individual $aux_i$. Recovering a single $x_i$ from $c_i$ would break DCR. (Proposition~\ref{prop:agg}.)
    
    \item \textbf{Collector obliviousness.} The collector learns nothing about any vote. \emph{Reason:} Each $aux_i = pk_A^{sk_i}$ depends only on $(pk_A, sk_i)$ and is independent of the ballot $x_i$. The collector never sees a ciphertext. (Proposition~\ref{prop:collector}.)
    
    \item \textbf{Tally correctness.} The recovered sum $R$ equals the true sum $S = \sum x_i$ exactly. \emph{Reason:} Proven under Assumptions~\ref{ass:modulus}--\ref{ass:sum} via product homomorphism (Theorem~\ref{thm:product}), mask cancellation (Theorem~\ref{thm:mask}), and sum recovery (Theorem~\ref{thm:sum}).
    
    \item \textbf{Per-candidate count correctness.} The extracted count for each candidate equals the true number of votes for that candidate. \emph{Reason:} The 25-bit packing preserves slots independently under homomorphic addition, and extraction isolates each slot without cross-talk. (Corollary~\ref{cor:votes}, Assumptions~\ref{ass:packing} and~\ref{ass:binary}.)
    
    \item \textbf{Input integrity (tamper-evident).} The set of ciphertexts used in aggregation can be checked against the immutable on-chain record. \emph{Reason:} Each $c_i$ is a finalized Solana transaction; any insertion, deletion, or modification diverges from the chain and is detectable by any auditor.
    
    \item \textbf{Output integrity (deterministic).} Anyone holding the on-chain ciphertexts and $aux$ can recompute $R$ and verify the published result. \emph{Reason:} $R$ is a deterministic function of public inputs; a dishonest aggregator cannot publish a different valid tally.
    
    \item \textbf{No key coordination.} Voters generate their own secret keys without a dealer. \emph{Reason:} The protocol's mask cancellation works because $sk_i$ serves as both the encryption randomness and the auxiliary exponent (Assumption~\ref{ass:keyrand}), so no external key distribution is needed.
\end{itemize}

\subsection{What is \emph{not} guaranteed}

\begin{itemize}[leftmargin=*]
    \item \textbf{No range enforcement.} The system does not cryptographically verify $v_{i,j} \in \{0,1\}$ (Assumption~\ref{ass:binary} is trusted, not enforced). ZK range proofs would close this gap.
    \item \textbf{No coercion resistance.} A voter can reveal $sk_i$ to prove how they voted.
    \item \textbf{No collusion resistance (without enclaves).} Privacy assumes the collector and aggregator do not pool their views. Secure enclaves (Section~\ref{sec:enclave}) can make this a hardware-enforced guarantee.
    \item \textbf{No availability guarantee.} A withholding aggregator can refuse to publish. Threshold decryption across multiple aggregators (Section~\ref{sec:threshold}) would address this.
\end{itemize}

\section{System Model}\label{sec:model}

The protocol involves four kinds of party. This section describes them at the level of roles, knowledge, and trust; their concrete realization as software components is deferred to Section~\ref{sec:implementation}. Figure~\ref{fig:model} shows the parties and the messages between them.

\begin{figure}[h]
\centering
\includegraphics[width=\textwidth]{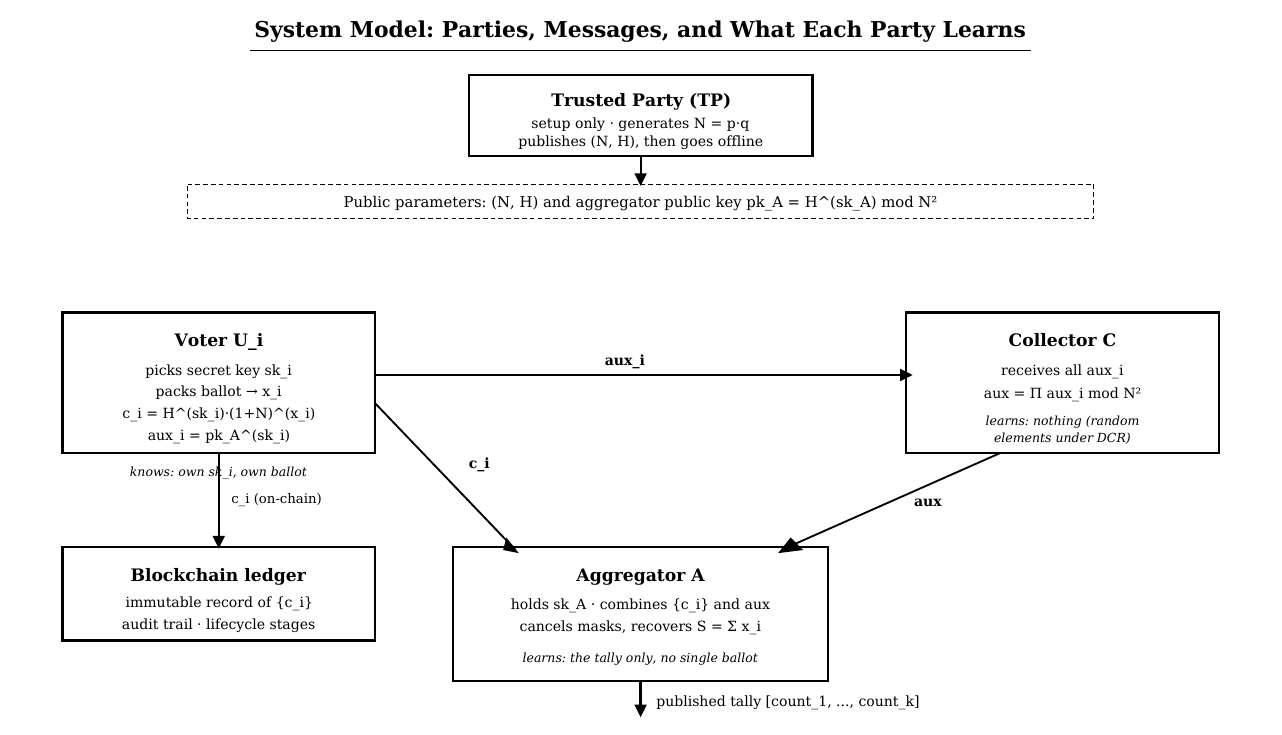}
\caption{System model. The trusted party publishes $(N, H)$ at setup and leaves. Each voter sends a ciphertext $c_i$ to the aggregator (and to the chain) and an auxiliary value $aux_i$ to the collector. The collector forwards the product $aux$; the aggregator recovers and publishes the tally. No party except the voter sees an individual plaintext.}
\label{fig:model}
\end{figure}

\subsection{Parties and roles}

\begin{itemize}[leftmargin=*]
    \item \textbf{Trusted third party ($TP$).} Generates the modulus $N = pq$, picks the base $H$, publishes $(N, H)$, and goes offline. It takes part only in setup and never sees a ballot.
    \item \textbf{Voters ($U_1, \ldots, U_n$).} Each voter picks a secret key $sk_i$, encodes a ballot as a packed integer $x_i$, and produces a ciphertext $c_i$ and an auxiliary value $aux_i$. A voter knows only its own key and ballot.
    \item \textbf{Collector ($C$).} The collector assembles the material that lets the masks be canceled in bulk. Each voter sends it an auxiliary value $aux_i = pk_A^{sk_i} = H^{sk_A \cdot sk_i}$, which ties the voter's secret key $sk_i$ to the aggregator's public key but carries no information about the ballot. The collector multiplies these into one element $aux = \prod_i aux_i = H^{sk_A \sum_i sk_i}$ and forwards it to the aggregator. This single value is exactly what the aggregator needs to strip every per-voter mask $H^{sk_i}$ at once; without it the sum cannot be recovered. The collector never receives a ciphertext $c_i$, and each $aux_i$ depends only on $(pk_A, sk_i)$, so its role is integrity-critical: it must include every $aux_i$ exactly once, but it learns nothing about any vote.
    \item \textbf{Aggregator ($A$).} The aggregator produces the result. At setup it generates its own secret key $sk_A$ and publishes $pk_A = H^{sk_A}$. At tally time it multiplies all ciphertexts into $\Pi = \prod_i c_i$, raises the product to $sk_A$, divides out the collector's $aux$ to cancel the masks, applies the $L$-function, and multiplies by $sk_A^{-1}$ to recover the sum $S = \sum_i x_i$, which it unpacks into per-candidate counts. It holds $sk_A$ and sees every ciphertext and the aggregate $aux$, but never an individual $aux_i$; under the assumptions of Section~\ref{sec:prelim} it learns the tally and nothing about any single ballot. It can refuse to publish but cannot forge a different result, since the tally is fixed by the public ciphertexts and $aux$.
\end{itemize}

\subsection{Trust model}

\begin{itemize}[leftmargin=*]
    \item \textbf{$TP$:} trusted only for setup. It knows the factorization of $N$ but no $sk_i$, and it is offline once $(N, H)$ are published.
    \item \textbf{$A$:} holds $sk_A$. It can compute the sum but not individual votes (under DCR, Definition~\ref{def:dcr}). It can withhold a result but cannot forge one: the sum is fixed by $\{c_i\}$ and $aux$.
    \item \textbf{$C$:} sees only $\{aux_i\}$, which are independent of the ballots and, under DCR, indistinguishable from random group elements. It learns nothing about votes.
    \item \textbf{$U_i$:} knows only its own $sk_i$ and ballot; the plaintext never leaves the voter.
    \item \textbf{Blockchain validators:} untrusted, under Byzantine fault-tolerant consensus.
\end{itemize}

\subsection{High-level flow}

Setup fixes $(N, H)$ and the aggregator key. Each voter independently encrypts and emits $(c_i, aux_i)$; the ciphertexts are also recorded on-chain for an audit trail. The collector aggregates the auxiliary values into a single element, and the aggregator uses it to cancel the per-voter masks and recover the tally in one homomorphic computation. The cryptographic detail of each step follows in Section~\ref{sec:protocol}.

\section{Cryptographic Protocol}\label{sec:protocol}

Recall the four parties of Section~\ref{sec:model}: $TP$, voters $U_1, \ldots, U_n$, collector $C$, and aggregator $A$. Figure~\ref{fig:protocol} traces the full execution from a voter's ballot to the published tally.

\begin{figure}[p]
\centering
\includegraphics[width=\textwidth]{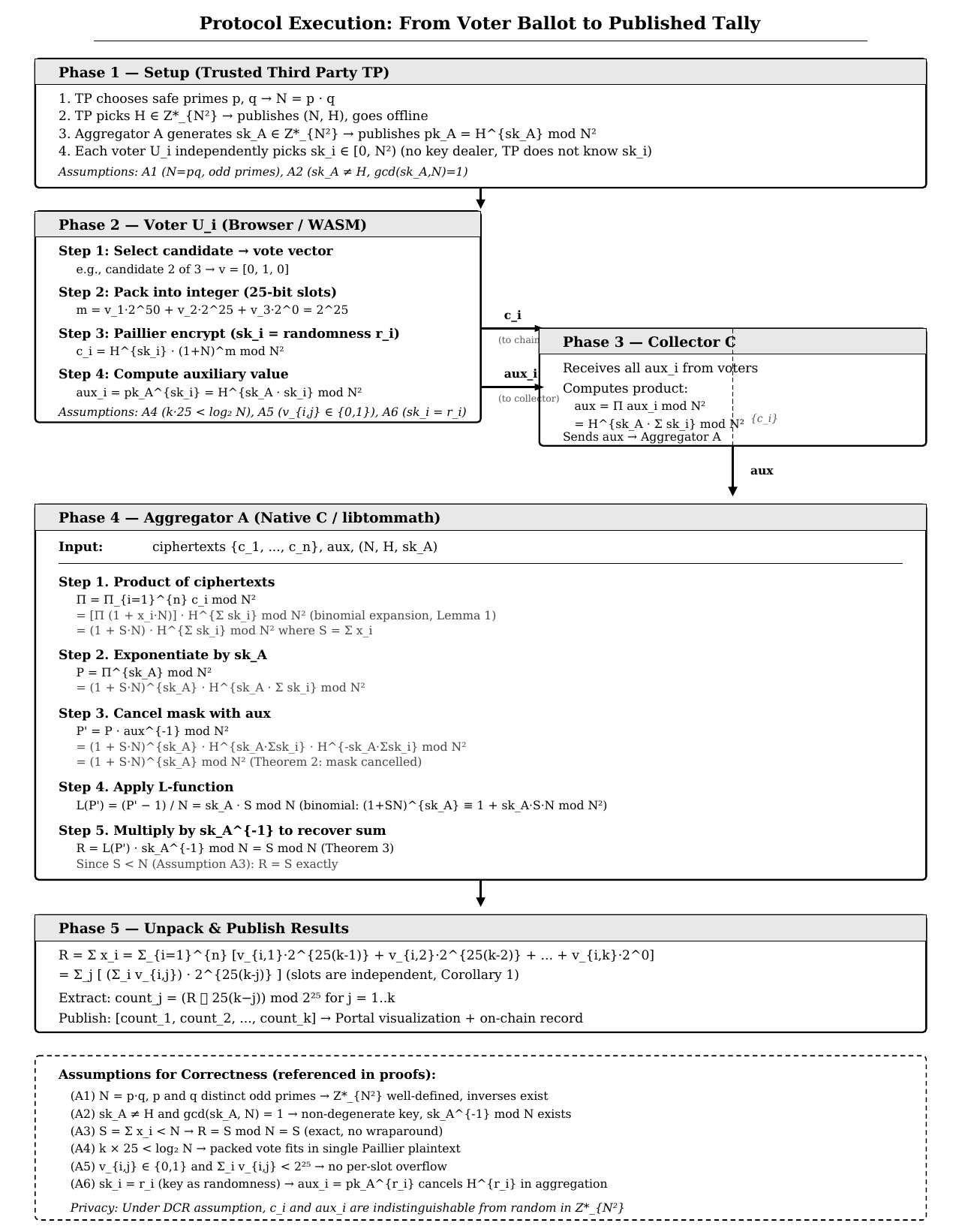}
\caption{Protocol execution. Phase 1: setup by $TP$. Phase 2: a voter encrypts a ballot and computes an auxiliary value. Phase 3: the collector multiplies the auxiliary values. Phase 4: the aggregator runs the Paillier tally. Phase 5: unpack and publish. The dashed box lists the six assumptions used in the proofs.}
\label{fig:protocol}
\end{figure}

\subsection{Assumptions}

The six assumptions below are sufficient for correctness. Each is cited by the proofs of Section~\ref{sec:proofs}.

\begin{assumption}[Valid modulus]\label{ass:modulus}
$N = p \cdot q$ with $p, q$ distinct odd primes. This makes $\mathbb{Z}_{N^2}^*$ a well-defined multiplicative group, gives $\gcd(H, N^2) = 1$ for $H \in \mathbb{Z}_{N^2}^*$, and guarantees that modular inverses exist for every element coprime to $N^2$.
\end{assumption}

\begin{assumption}[Invertible, non-trivial aggregator key]\label{ass:key}
$\gcd(sk_A, N) = 1$, and the resulting public key $pk_A = H^{sk_A} \bmod N^2$ is non-trivial, i.e.\ $pk_A \notin \{1, H\}$. The first condition is the one correctness needs: it makes $sk_A^{-1} \bmod N$ exist for the final recovery step. The second is a directly checkable sanity condition (both $pk_A$ and the comparison are computable without the factorization of $N$) ruling out a degenerate public key. The reference implementation enforces both, having originally failed by setting $sk_A$ equal to $H$.
\end{assumption}

\begin{assumption}[Packing capacity]\label{ass:packing}
$k \cdot b \le \lfloor \log_2 N \rfloor$, where $k$ is the candidate count and $b$ is the bits per slot. Equivalently $2^{k b} \le N$. This keeps every packed vote $x_i$ inside $\mathbb{Z}_N$, i.e.\ $x_i < N$. Only the product $k \cdot b$ is constrained; $b$ is otherwise free. For the 255-bit $N$ used here\footnote{A 255-bit integer has $\lfloor \log_2 N \rfloor = 254$, so the bound on $k \cdot b$ is 254, not 255.}, $\lfloor\log_2 N\rfloor = 254$, so with $b = 25$, $k \leq \lfloor 254/25 \rfloor = 10$.
\end{assumption}

\begin{assumption}[Per-slot bound]\label{ass:binary}
Each $v_{i,j} \in \{0, 1\}$ and, for every candidate $j$, the tally $T_j = \sum_{i=1}^n v_{i,j}$ satisfies $T_j < 2^b$. This keeps each slot from overflowing during homomorphic addition.
\end{assumption}

\begin{assumption}[Sum bound]\label{ass:sum}
$S = \sum_{i=1}^{n} x_i < N$. This makes the final modular reduction exact, $R = S \bmod N = S$, with no wraparound. It is implied by Assumptions~\ref{ass:packing} and~\ref{ass:binary}: if every slot stays below $2^b$, then $S \le \sum_{j=1}^{k}(2^b-1)\,2^{b(k-j)} = 2^{kb} - 1 < N$.
\end{assumption}

\begin{assumption}[Key as randomness]\label{ass:keyrand}
The voter's secret key is reused as the masking randomness, $r_i = sk_i$. This is the structural assumption behind mask cancellation. Because $aux_i = pk_A^{sk_i} = H^{sk_A \cdot sk_i}$, the factor $H^{sk_i}$ in the ciphertext product, once raised to $sk_A$, becomes $H^{sk_A \cdot sk_i}$, which $aux_i$ cancels exactly.
\end{assumption}

\subsection{Setup phase}

$TP$ selects primes $p, q$, computes $N = pq$, and picks $H \in \mathbb{Z}_{N^2}^*$. It publishes $(N, H)$ and goes offline. The aggregator $A$ generates $sk_A$ satisfying Assumption~\ref{ass:key} and publishes $pk_A = H^{sk_A} \bmod N^2$. Each voter $U_i$ independently samples a secret key $sk_i \in_R [2, N^2)$.

\subsection{Encryption phase}

Voter $U_i$ encodes the ballot as a packed integer $x_i$ (Section~\ref{sec:packing}; in the encryption formula below, $x_i$ is written as $m$ for clarity) and computes
\begin{align}
c_i &= H^{sk_i} \cdot (1 + N)^{x_i} \bmod N^2, \label{eq:enc}\\
aux_i &= pk_A^{sk_i} = H^{sk_A \cdot sk_i} \bmod N^2. \label{eq:auxenc}
\end{align}
The ciphertext $c_i$ goes to the aggregator and is recorded on-chain; $aux_i$ goes to the collector. Note that $aux_i$ depends only on $(pk_A, sk_i)$ and not on the ballot $x_i$.

\subsection{Collection phase}

The collector $C$ computes the product of the auxiliary values:
\begin{equation}
aux = \prod_{i=1}^{n} aux_i = H^{sk_A \cdot \sum_{i=1}^n sk_i} \bmod N^2, \label{eq:collect2}
\end{equation}
and sends $aux$ to the aggregator.

\subsection{Aggregation phase}

The aggregator recovers the sum in five steps:
\begin{align}
\Pi &= \prod_{i=1}^{n} c_i \bmod N^2, \label{eq:step1}\\
P &= \Pi^{sk_A} \bmod N^2, \label{eq:step2}\\
P' &= P \cdot aux^{-1} \bmod N^2, \label{eq:step3}\\
L(P') &= \frac{P' - 1}{N}, \label{eq:step4}\\
R &= L(P') \cdot sk_A^{-1} \bmod N. \label{eq:step5}
\end{align}
Under Assumptions~\ref{ass:modulus}--\ref{ass:sum}, $R = \sum_{i=1}^n x_i$ exactly (Theorem~\ref{thm:sum}).

\section{Correctness Proofs}\label{sec:proofs}

We prove that the pipeline recovers $S = \sum x_i$ through three theorems and a supporting lemma. Each proof names the assumptions it uses.

\begin{lemma}[Binomial expansion in $\mathbb{Z}_{N^2}$]\label{lem:binom}
For non-negative integers $a_1, \ldots, a_n$,
\[\prod_{i=1}^{n} (1 + a_i N) \equiv 1 + \left(\sum_{i=1}^{n} a_i\right) N \pmod{N^2}.\]
\end{lemma}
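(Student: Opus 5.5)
The plan is to argue by induction on $n$. The key fact is that in the ring $\mathbb{Z}/N^2\mathbb{Z}$ any product of two multiples of $N$ vanishes. No assumption beyond $N \ge 1$ is needed, since the argument is purely about integer congruences. In particular, the non-negativity of the $a_i$ plays no real role: it only matches how the lemma will be applied to plaintexts $x_i$.

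For the base case $n = 1$ the statement reads $1 + a_1 N \equiv 1 + a_1 N$, which is trivial. For the case $n=0$, if one wishes to include it, the empty product and the empty sum give $1 \equiv 1$.

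For the inductive step, assume
\[\prod_{i=1}^{n-1}(1 + a_i N) \equiv 1 + A N \pmod{N^2}, \qquad A = \sum_{i=1}^{n-1} a_i .\]
Because congruence modulo $N^2$ is compatible with multiplication, I can multiply both sides by $1 + a_n N$. Expanding over the integers gives
\[(1 + AN)(1 + a_n N) = 1 + (A + a_n)N + A a_n N^2 .\]
The last term is an integer multiple of $N^2$, so it drops out, leaving $1 + \bigl(\sum_{i=1}^{n} a_i\bigr) N$ modulo $N^2$, as required.

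Alternatively, one could expand the full product directly. Every monomial that involves at least two of the $a_i N$ factors carries $N^2$, so only the constant term and the linear terms survive. This mirrors the binomial argument behind Equation~\ref{eq:binom}, and in fact specializing all $a_i = 1$ recovers that equation.

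I expect no genuine obstacle. The only point needing care is to keep the statement as a congruence rather than an equality of reduced residues. When $\sum a_i \ge N$, the representative $1 + (\sum a_i)N$ exceeds $N^2$, and reducing it corresponds to reducing $\sum a_i$ modulo $N$. The congruence holds regardless, and the exactness needed later for Theorem~\ref{thm:sum} is supplied separately by Assumption~\ref{ass:sum}.
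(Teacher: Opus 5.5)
Your proof is correct and matches the paper's: both use induction on $n$, with the trivial base case $n=1$. In both, the inductive step expands $(1+AN)(1+a_nN) = 1 + (A+a_n)N + Aa_nN^2$ over the integers and drops the $N^2$ term modulo $N^2$.
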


\begin{proof}
By induction on $n$.

\textbf{Base case} ($n = 1$): $1 + a_1 N \equiv 1 + a_1 N \pmod{N^2}$, which holds trivially.

\textbf{Inductive step}: assume the claim for $n - 1$ terms. Then
\begin{align*}
\prod_{i=1}^{n}(1 + a_i N) &= \underbrace{\prod_{i=1}^{n-1}(1 + a_i N)}_{\equiv\, 1 + (\sum_{i=1}^{n-1} a_i) N \pmod{N^2}} \cdot\, (1 + a_n N) \\
&\equiv \left(1 + \textstyle\sum_{i=1}^{n-1} a_i \cdot N\right)(1 + a_n N) \pmod{N^2} \\
&= 1 + \textstyle\sum_{i=1}^{n-1} a_i \cdot N + a_n N + \left(\textstyle\sum_{i=1}^{n-1} a_i\right) a_n N^2 \\
&\equiv 1 + \left(\textstyle\sum_{i=1}^{n} a_i\right) N \pmod{N^2},
\end{align*}
since the $N^2$ term vanishes modulo $N^2$.
\end{proof}

\begin{theorem}[Ciphertext product homomorphism]\label{thm:product}
Under Assumption~\ref{ass:modulus},
\[\prod_{i=1}^{n} c_i \equiv \left(1 + S \cdot N\right) \cdot H^{\Sigma} \pmod{N^2},\]
where $S = \sum_{i=1}^n x_i$ and $\Sigma = \sum_{i=1}^n sk_i$. (The congruence holds for any non-negative $x_i$; the packing and sum bounds enter only at recovery, Theorem~\ref{thm:sum} and Corollary~\ref{cor:votes}.)
\end{theorem}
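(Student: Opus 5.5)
The plan is to expand the product using the definition of each ciphertext, Equation~\ref{eq:enc}, and then separate the masking factors from the encoding factors. By Assumption~\ref{ass:modulus}, $\mathbb{Z}_{N^2}^*$ is an abelian group and reduction modulo $N^2$ respects multiplication. This lets us write
\[\prod_{i=1}^{n} c_i \equiv \prod_{i=1}^{n} H^{sk_i}(1+N)^{x_i} \equiv \Bigl(\prod_{i=1}^{n} H^{sk_i}\Bigr)\Bigl(\prod_{i=1}^{n}(1+N)^{x_i}\Bigr) \pmod{N^2},\]
which is just commutativity and associativity applied to a finite product.

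\textbf{Mask factor.} The exponent law $H^{a}H^{b} = H^{a+b}$ holds for non-negative integer exponents in any monoid. Applying it repeatedly (formally, by induction on $n$) collapses the first product to $H^{\Sigma}$ with $\Sigma = \sum_i sk_i$. The $sk_i$ are non-negative integers sampled from $[2, N^2)$, so no inverses or reductions of exponents are needed.

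\textbf{Encoding factor.} For each $i$, Equation~\ref{eq:binom} gives $(1+N)^{x_i} \equiv 1 + x_i N \pmod{N^2}$ for any non-negative integer $x_i$. No bound $x_i < N$ is needed, since the congruence is taken modulo $N^2$ rather than as an equality of reduced representatives. Lemma~\ref{lem:binom} with $a_i = x_i$ then yields
\[\prod_{i=1}^{n}(1 + x_i N) \equiv 1 + S N \pmod{N^2}.\]
An equivalent route is to first combine $\prod_i (1+N)^{x_i} = (1+N)^{S}$ by the exponent law and then apply Equation~\ref{eq:binom} once. Either way, combining the two factors gives the claim.

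The only delicate step is the substitution of congruent factors inside a product. Replacing each $(1+N)^{x_i}$ by $1 + x_i N$ is justified because congruence modulo $N^2$ is compatible with multiplication. I would state this explicitly, and also note that nothing beyond Assumption~\ref{ass:modulus} is used. In particular, the packing and sum bounds play no role here, consistent with the parenthetical remark in the statement.
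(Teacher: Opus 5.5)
Your proposal is correct and follows the paper's proof essentially step for step: reduce each $c_i$ to $H^{sk_i}(1+x_iN)$ via Equation~\ref{eq:binom}, split the product into a mask factor and an encoding factor, collapse the first to $H^{\Sigma}$, and apply Lemma~\ref{lem:binom} to the second. Your extra remarks are accurate and only make explicit what the paper leaves implicit: congruence modulo $N^2$ is compatible with multiplication, the exponents $sk_i$ are non-negative, and no bound $x_i<N$ is needed.
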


\begin{proof}
From Equation~\ref{eq:enc}, $c_i = H^{sk_i} \cdot (1+N)^{x_i} \bmod N^2$. By Equation~\ref{eq:binom}, $(1+N)^{x_i} \equiv 1 + x_i N \pmod{N^2}$, so $c_i \equiv H^{sk_i} \cdot (1 + x_i N) \pmod{N^2}$. Taking the product,
\[\prod_{i=1}^n c_i \equiv \left(\prod_{i=1}^n H^{sk_i}\right) \cdot \left(\prod_{i=1}^n (1 + x_i N)\right) \pmod{N^2}.\]
The first factor is $H^{\sum sk_i} = H^{\Sigma}$. For the second, Lemma~\ref{lem:binom} gives $\prod_{i=1}^n (1 + x_i N) \equiv 1 + S \cdot N \pmod{N^2}$. Combining the two factors yields the claim.
\end{proof}

\begin{theorem}[Mask cancellation]\label{thm:mask}
Under Assumptions~\ref{ass:modulus},~\ref{ass:key}, and~\ref{ass:keyrand},
\[P' = P \cdot aux^{-1} \equiv (1 + S \cdot N)^{sk_A} \pmod{N^2}.\]
\end{theorem}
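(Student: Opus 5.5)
The plan is to start from Theorem~\ref{thm:product} and carry the congruence through the two aggregation steps \eqref{eq:step2} and \eqref{eq:step3}. Theorem~\ref{thm:product} gives $\Pi \equiv (1+S N)\, H^{\Sigma} \pmod{N^2}$ with $\Sigma = \sum_i sk_i$. Raising both sides to the non-negative integer $sk_A$ is compatible with congruence modulo $N^2$, and $\mathbb{Z}_{N^2}^*$ is commutative. Together these give
\[
P = \Pi^{sk_A} \equiv (1+SN)^{sk_A}\, H^{sk_A \Sigma} \pmod{N^2}.
\]

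The second step rewrites the collector's value. By Equation~\eqref{eq:auxenc} and Assumption~\ref{ass:keyrand}, $aux_i = pk_A^{sk_i} = (H^{sk_A})^{sk_i} = H^{sk_A sk_i}$. Multiplying over $i$ as in Equation~\eqref{eq:collect2} gives $aux = H^{sk_A \Sigma} \bmod N^2$. The key observation is that the same exponent $sk_A\Sigma$ appears here and in $P$. This coincidence is exactly what Assumption~\ref{ass:keyrand} buys: the mask randomness in $c_i$ is the same $sk_i$ used to form $aux_i$.

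Next, the inverse must exist. By Assumption~\ref{ass:modulus}, $H \in \mathbb{Z}_{N^2}^*$, so $\gcd(H,N^2)=1$. Hence every power of $H$, in particular $aux$, is a unit modulo $N^2$, and $aux^{-1}$ is well defined. I would also note that $aux^{-1} \equiv H^{-sk_A\Sigma}$. Then
\[
P' = P \cdot aux^{-1} \equiv (1+SN)^{sk_A} H^{sk_A\Sigma} H^{-sk_A\Sigma} \equiv (1+SN)^{sk_A} \pmod{N^2},
\]
which is the claim. Assumption~\ref{ass:key} is not needed for this congruence itself. Its condition $\gcd(sk_A,N)=1$ matters only later, at recovery. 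I would remark on this rather than force its use.

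The main obstacle is only bookkeeping. The exponents $sk_A\Sigma$ are large integers used without reduction. One could reduce them modulo the group order, but I would avoid that, since the order is unknown without the factorization and reduction is unnecessary: integer exponent laws hold in any group. The proof must also make sure that $aux$ is computed as a product of the $aux_i$, not sent with a separately reduced exponent, so that the exponents match exactly as integers.
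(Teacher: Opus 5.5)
Your proof is correct and follows the paper's argument step for step: invoke Theorem~\ref{thm:product}, raise to $sk_A$, identify $aux = H^{sk_A\Sigma}$ via Assumption~\ref{ass:keyrand}, use Assumption~\ref{ass:modulus} to guarantee $aux^{-1}$ exists, and cancel. Your remark that Assumption~\ref{ass:key} plays no role in this congruence is also accurate; the paper lists it among the hypotheses, but its proof never actually uses it either.
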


\begin{proof}
From Theorem~\ref{thm:product}, $\Pi \equiv (1 + S \cdot N) \cdot H^{\Sigma} \pmod{N^2}$. Raising to $sk_A$,
\[P = \Pi^{sk_A} \equiv (1 + S \cdot N)^{sk_A} \cdot H^{sk_A \cdot \Sigma} \pmod{N^2}.\]
By Equation~\ref{eq:collect2} and Assumption~\ref{ass:keyrand}, $aux = H^{sk_A \cdot \Sigma} \bmod N^2$. By Assumption~\ref{ass:modulus}, $\gcd(H, N^2) = 1$, so $aux \in \mathbb{Z}_{N^2}^*$ and $aux^{-1} \bmod N^2$ exists. Therefore
\begin{align*}
P' = P \cdot aux^{-1} &\equiv (1 + S \cdot N)^{sk_A} \cdot H^{sk_A \Sigma} \cdot H^{-sk_A \Sigma} \\
&\equiv (1 + S \cdot N)^{sk_A} \pmod{N^2}.
\end{align*}
The mask $H^{sk_A \Sigma}$ cancels exactly.
\end{proof}

\begin{theorem}[Sum recovery]\label{thm:sum}
Under Assumptions~\ref{ass:modulus},~\ref{ass:key}, and~\ref{ass:sum},
\[R = L(P') \cdot sk_A^{-1} \bmod N = S = \sum_{i=1}^n x_i.\]
\end{theorem}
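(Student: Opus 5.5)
Starting from Theorem~\ref{thm:mask}, we have $P' \equiv (1 + S N)^{sk_A} \pmod{N^2}$; note that Theorem~\ref{thm:mask} also needs Assumption~\ref{ass:keyrand}, which we take as part of the standing setup. The plan is to expand this power with the binomial identity, evaluate $L$ on the reduced representative, and then strip off the factor $sk_A$ modulo $N$.

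The first step is to write $(1+SN)^{sk_A}$ as $\prod_{t=1}^{sk_A}(1+SN)$. Lemma~\ref{lem:binom} with $a_t = S$ for all $t$ (or directly the binomial expansion, as in Equation~\ref{eq:binom}) gives
\[
P' \equiv 1 + (sk_A S)\,N \pmod{N^2}.
\]
This step needs care on one point. The value $P'$ is the canonical residue in $[0,N^2)$, and $sk_A S$ may exceed $N$. Hence $P' = 1 + \big((sk_A S) \bmod N\big)N$, since $1 + uN \bmod N^2 = 1 + (u \bmod N)N$ for any $u \ge 0$. In particular $P' \equiv 1 \pmod N$, so $L$ is defined on $P'$ and the division is exact. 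We obtain $L(P') = (sk_A S) \bmod N$.

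Next we use Assumption~\ref{ass:key}. Because $\gcd(sk_A, N)=1$, the inverse $sk_A^{-1} \bmod N$ exists, and
\[
R = L(P')\,sk_A^{-1} \bmod N = (sk_A S)\,sk_A^{-1} \bmod N = S \bmod N.
\]
Finally, Assumption~\ref{ass:sum} gives $0 \le S < N$, so $S \bmod N = S$ and $R = S = \sum_i x_i$. Assumption~\ref{ass:modulus} is used implicitly for $\mathbb{Z}_{N^2}^*$ to be a group and for the inverses in Theorem~\ref{thm:mask}.

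The main obstacle is the bookkeeping between integers and residues in the $L$-step. The congruence $P' \equiv 1 + sk_A S N$ must not be treated as an integer equality, since $sk_A S$ is generally larger than $N$. Applying $L$ to the canonical representative therefore yields the reduced value $(sk_A S) \bmod N$, not $sk_A S$ itself. Keeping that reduction explicit is what makes the final multiplication by $sk_A^{-1}$ legitimate modulo $N$, and it is the point where the sum bound becomes essential.
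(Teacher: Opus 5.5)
Your proof is correct and follows the paper's argument essentially step for step. Both expand $(1+SN)^{sk_A}$ binomially, evaluate $L$ on the canonical representative to get $(sk_A S) \bmod N$, invert $sk_A$ modulo $N$ via Assumption~\ref{ass:key}, and remove the final reduction with Assumption~\ref{ass:sum}. Your remark that the starting congruence for $P'$ comes from Theorem~\ref{thm:mask}, and so also relies on Assumption~\ref{ass:keyrand}, is accurate; the paper leaves this dependency implicit in the theorem's assumption list.
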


\begin{proof}
By the binomial theorem (Equation~\ref{eq:binom} with exponent $sk_A$ and base increment $S \cdot N$),
\[(1 + S N)^{sk_A} \equiv 1 + sk_A \cdot S \cdot N \pmod{N^2},\]
since every term of order $\geq 2$ in $SN$ carries $N^2$. Reducing the canonical representative of $P'$ in $[0, N^2)$, the coefficient of $N$ lies in $[0, N)$, so
\[P' = 1 + \big( (sk_A \cdot S) \bmod N \big)\, N, \qquad L(P') = \frac{P' - 1}{N} = (sk_A \cdot S) \bmod N.\]
By Assumption~\ref{ass:key}, $\gcd(sk_A, N) = 1$, so $sk_A^{-1} \bmod N$ exists and
\[R = L(P') \cdot sk_A^{-1} \bmod N = (sk_A \cdot S \cdot sk_A^{-1}) \bmod N = S \bmod N.\]
By Assumption~\ref{ass:sum}, $S < N$, so $S \bmod N = S$.
\end{proof}

\begin{corollary}[Vote-count correctness]\label{cor:votes}
Under Assumptions~\ref{ass:packing} and~\ref{ass:binary}, the extracted count for each candidate equals the true tally:
\[\mathrm{count}_j = \sum_{i=1}^{n} v_{i,j} \quad \text{for all } j \in \{1, \ldots, k\}.\]
\end{corollary}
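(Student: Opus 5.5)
We first fix the packing convention, which the statement leaves implicit; we take the one used in Assumption~\ref{ass:sum}. Voter $i$'s ballot is
\[x_i = \sum_{j=1}^{k} v_{i,j}\, 2^{b(k-j)},\]
and the extracted count is
\[\mathrm{count}_j = \big\lfloor R / 2^{b(k-j)} \big\rfloor \bmod 2^b,\]
that is, shift right by $b(k-j)$ bits and mask with $2^b - 1$.

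\emph{Step 1: $R = S$.} Assumptions~\ref{ass:packing} and~\ref{ass:binary} imply Assumption~\ref{ass:sum}, as already noted there. Together with Assumptions~\ref{ass:modulus}, \ref{ass:key} and~\ref{ass:keyrand}, Theorems~\ref{thm:product}, \ref{thm:mask} and~\ref{thm:sum} then give $R = S$. So the corollary inherits those assumptions implicitly, and we will say so in the proof.

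\emph{Step 2: rearrange the sum by slots.} Exchanging the order of summation gives
\[S = \sum_i x_i = \sum_{j=1}^{k} T_j\, 2^{b(k-j)}, \qquad T_j = \sum_i v_{i,j}.\]
This is plain integer arithmetic. No reduction modulo $N$ occurs, because Step 1 already gives $R = S$ as integers.

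\emph{Step 3: uniqueness of base-$2^b$ representation.} This is the key step. By Assumption~\ref{ass:binary}, each $T_j$ satisfies $0 \le T_j < 2^b$. Hence the expression in Step 2 is exactly the base-$2^b$ expansion of $S$ with digits $T_1, \dots, T_k$. To extract slot $j$, split $S$ into three parts: the higher slots $j' < j$, which are multiples of $2^{b(k-j+1)}$; the slot $T_j\,2^{b(k-j)}$ itself; and the lower slots $j' > j$. The lower part is bounded by
\[\sum_{j'>j} (2^b - 1)\, 2^{b(k-j')} = 2^{b(k-j)} - 1 < 2^{b(k-j)},\]
a telescoping geometric sum. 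Dividing by $2^{b(k-j)}$ and taking the floor therefore discards the lower slots exactly. The higher slots become a multiple of $2^b$, which reducing modulo $2^b$ removes. What remains is $T_j$, since $T_j < 2^b$.

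The only real obstacle is this no-carry argument. It is handled entirely by the per-slot bound $T_j < 2^b$; without that bound a carry would corrupt slot $j-1$. Everything else is bookkeeping.
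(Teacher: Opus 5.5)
Your proof is correct and follows the paper's route: use Theorem~\ref{thm:sum} to get $R=S$, regroup $S$ as $\sum_j T_j\,2^{b(k-j)}$, and use $T_j<2^b$ to show that the shift-and-mask extraction isolates $T_j$. You go further than the paper in two places: you write out the telescoping bound $\sum_{j'>j}(2^b-1)\,2^{b(k-j')}=2^{b(k-j)}-1$ that the paper compresses into ``the slots do not interfere,'' and you correctly point out that the corollary silently inherits Assumptions~\ref{ass:modulus}, \ref{ass:key} and~\ref{ass:keyrand} through Theorems~\ref{thm:mask} and~\ref{thm:sum}.
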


\begin{proof}
By Theorem~\ref{thm:sum}, $R = \sum_{i=1}^n x_i$. Each $x_i = \sum_{j=1}^k v_{i,j} \cdot 2^{b(k-j)}$, so
\[R = \sum_{i=1}^n \sum_{j=1}^k v_{i,j} \cdot 2^{b(k-j)} = \sum_{j=1}^k \underbrace{\left(\sum_{i=1}^n v_{i,j}\right)}_{T_j} \cdot 2^{b(k-j)}.\]
Because each $T_j < 2^b$ (Assumption~\ref{ass:binary}), the slots do not interfere. The extraction $\mathrm{count}_j = \lfloor R / 2^{b(k-j)} \rfloor \bmod 2^b$ then isolates $T_j$, and since $v_{i,j} \in \{0,1\}$, $T_j$ is the true count.
\end{proof}

\section{Security Analysis}\label{sec:security}

\subsection{Cryptographic foundation}

Privacy rests on the two assumptions of Section~\ref{sec:prelim}: DCR for the additive Paillier layer, and pseudorandomness of the masking term $H^{sk_i}$ in $\langle H\rangle$ under a Diffie--Hellman-style assumption on $H$. Under these assumptions the masked encoding $c_i = H^{sk_i}(1+N)^{x_i} \bmod N^2$ is a semantically secure encryption of $x_i$; the aggregator-oblivious construction is proven private under exactly these assumptions in~\cite{shi2011, joye2013, lem14b}. We do not reprove that result. The propositions below state the guarantees in our setting, where the collector and aggregator are distinct, non-colluding parties (collusion is discussed under Limitations). Importantly, no single honest-but-curious party sees both $c_i$ and the matching $aux_i$: the collector sees only $\{aux_i\}$, while the aggregator sees $\{c_i\}$ and the \emph{product} $aux$, never an individual $aux_i$.

\subsection{Privacy guarantees}

\begin{proposition}[Individual vote privacy]\label{prop:privacy}
Under the assumptions of Section~\ref{sec:prelim}, no efficient adversary that sees $c_i$, together with the public $(N, H, pk_A)$, can learn $x_i$ with non-negligible advantage.
\end{proposition}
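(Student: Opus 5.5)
The plan is a short hybrid argument that reduces the statement to the semantic security of standard Paillier encryption. Fix an adversary $\mathcal{A}$ that, given $(N, H, pk_A, c_i)$, outputs a guess about $x_i$. I phrase "learn $x_i$ with non-negligible advantage" as the usual indistinguishability game: $\mathcal{A}$ picks $x^{(0)}, x^{(1)} < N$, receives $c_i$ encrypting $x^{(\beta)}$ for a random bit $\beta$, and must guess $\beta$. The hybrids are:
\begin{itemize}
\item \textbf{Game 0:} the real view, with $c_i = H^{sk_i}(1+N)^{x^{(\beta)}}$, $pk_A = H^{sk_A}$, and $sk_i \in_R [2, N^2)$ as in the setup phase.
\item \textbf{Game 1:} replace $H^{sk_i}$ by an element $u$ drawn uniformly from $\langle H\rangle$.
\item \textbf{Game 2:} replace $u$ by a uniform $N$-th residue $y^N \bmod N^2$.
\end{itemize}

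For the step from Game 0 to Game 1, I invoke the masking assumption of Section~\ref{sec:prelim}: $H^{sk_i}$ is pseudorandom in $\langle H\rangle$. This must hold even given $pk_A = H^{sk_A}$, which is what makes it Diffie--Hellman-style rather than a bare discrete-log statement. A distinguisher between Games 0 and 1 directly yields a distinguisher for that assumption. The reduction embeds the challenge element as the mask, computes $(1+N)^{x^{(\beta)}}$ itself, and uses $pk_A$ from the challenge tuple. One routine technical point must be checked: sampling $sk_i$ from $[2, N^2)$ rather than from $[0, |\langle H\rangle|)$ is statistically close to uniform on the exponent group. This holds as long as $N^2$ is much larger than the order of $H$, or as long as the masking assumption is stated directly for this sampling distribution.

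For the step from Game 1 to Game 2, I need $\langle H\rangle$ to sit inside, and be computationally indistinguishable from, the subgroup of $N$-th residues. Once that holds, Game 2 is exactly a standard Paillier ciphertext $y^N(1+N)^{x^{(\beta)}}$. Its semantic security under DCR (Definition~\ref{def:dcr}) is the classical result of~\cite{paillier1999public}, which I take as given. In Game 2, DCR lets me further replace $y^N$ by a uniform element of $\mathbb{Z}_{N^2}^*$. This makes $c_i$ uniform and independent of $\beta$, so the advantage there is $1/2$.

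Summing the hybrid gaps then bounds $\mathcal{A}$'s advantage by the masking advantage plus twice the DCR advantage plus a statistical term, all negligible.

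I expect the Game 1 to Game 2 step to be the main obstacle. It is not automatic for an arbitrary $H \in \mathbb{Z}_{N^2}^*$. For example, if $H$ has a component in $\langle 1+N\rangle$, or $H = 1+N$ outright, then the mask itself shifts the plaintext and the masking group leaks or corrupts $x_i$. I would therefore first try to make explicit the setup condition implicit in~\cite{joye2013, lem14b}: $TP$ chooses $H = h^N \bmod N^2$ for a random $h$, so that $\langle H\rangle$ is a large subgroup of the $N$-th residues. With this choice, the $N$-th-residue part of Paillier's argument applies verbatim.

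If that condition cannot be assumed, the fallback is to state the proposition relative to the combined assumption used in~\cite{shi2011, joye2013, lem14b}: that $H^{sk}(1+N)^x$ hides $x$. I would then cite their proof rather than reprove it, consistent with the paper's stance in Section~\ref{sec:security}.
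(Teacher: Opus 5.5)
Your fallback is the paper's proof. The paper does no hybrid: it asserts that pseudorandomness of $H^{sk_i}$ in $\langle H\rangle$, together with DCR, makes $c_i$ semantically secure, and cites \cite{shi2011, joye2013, lem14b}.

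Your primary route, however, fails at the step from Game~1 to Game~2, and the remedy $H=h^N$ does not repair it. Paillier's argument needs a \emph{uniformly random} $N$-th residue, whereas Game~1 uses a uniform element of the cyclic subgroup $\langle H\rangle$. The $N$-th residues form a group isomorphic to $\mathbb{Z}_N^*$, which is not cyclic, so $\langle H\rangle$ has index at least $\gcd(p-1,q-1)\ge 2$ in it. The two uniform distributions can then be told apart efficiently. Concretely, $H^s\equiv h^{Ns}\pmod N$ has Jacobi symbol $\left(\frac{h}{N}\right)^{s}$. So when $\left(\frac{h}{N}\right)=1$ (about half of all $h$), every element of $\langle H\rangle$ has Jacobi symbol $1$ modulo $N$. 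Because $(1+N)^x\equiv 1\pmod N$, the same holds for every Game~1 ciphertext. By contrast, $y^N$ for uniform $y$ has Jacobi symbol $\left(\frac{y}{N}\right)=\pm 1$ with equal probability. The Jacobi symbol is computable without the factorization, so Games~1 and~2 are distinguishable with advantage $1/2$. The indistinguishability you need is false, and no mask-switching reduction can justify that transition.

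Your diagnosis of the obstacle is also inverted: a $(1+N)$-component in $H$ is what makes the proof easy. Write $H=(1+N)^a y^N$. For uniformly random $H$, $\gcd(a,N)=1$ with overwhelming probability. Then, with the usual $\gcd(N,\varphi(N))=1$, the power $H^{\mathrm{ord}(y^N)}=(1+N)^{a\cdot\mathrm{ord}(y^N)}$ generates $\langle 1+N\rangle$. Hence $(1+N)^{x}\in\langle H\rangle$ for every $x$, and $u(1+N)^{x^{(\beta)}}$ is itself uniform in $\langle H\rangle$ whenever $u$ is. Game~1 is therefore already independent of $\beta$, no Game~2 is needed, and the bound is just the masking advantage plus a negligible term. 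One caution: for such $H$, $\mathrm{ord}(H)$ can be as large as $N\cdot\mathrm{lcm}(p-1,q-1)$, within a small factor of $N^2$. Your sampling caveat must then be absorbed into the assumption rather than discharged statistically.

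Your two counterexamples also do not hold. The choice $H=1+N$ leaks nothing in this proposition's view: $c_i$ reveals only $sk_i+x_i\bmod N$, which is a one-time pad. No choice of $H$ corrupts correctness either, since the proof of Theorem~\ref{thm:mask} uses nothing about $H$ beyond $H\in\mathbb{Z}_{N^2}^*$.

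If you want the setup $H=h^N$, apply DCR to $H$ itself rather than to the mask. The reduction embeds its DCR challenge as $H$; it samples $sk_A$ and $sk_i$ itself and never needs the factorization. The branch where the challenge is uniform in $\mathbb{Z}_{N^2}^*$ is exactly the generic case above.
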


\begin{proof}
The ciphertext $c_i = H^{sk_i}(1+N)^{x_i} \bmod N^2$ multiplies the Paillier encoding of $x_i$ by the mask $H^{sk_i}$ with a secret, uniformly chosen $sk_i$. By the masking assumption the mask is pseudorandom in $\langle H\rangle$, and by DCR the additive layer is semantically secure; together these make $c_i$ a semantically secure encryption of $x_i$~\cite{shi2011, joye2013, lem14b}, so no efficient adversary distinguishes it from an encryption of any other ballot. The public values $(N, H, pk_A)$ do not reveal the factorization of $N$ and give no additional advantage.
\end{proof}

\begin{proposition}[Collector obliviousness]\label{prop:collector}
The collector $C$, seeing $\{aux_i\}_{i=1}^n$, learns nothing about any $x_i$.
\end{proposition}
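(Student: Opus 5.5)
The plan is a simulation argument. It shows that the collector's entire view is a function of data that is statistically independent of the ballots. No computational assumption is needed for this step.

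\textbf{Step 1: fix the collector's view.} By the protocol of Section~\ref{sec:protocol}, the collector's view consists of the public parameters $(N, H, pk_A)$ and the values $\{aux_i\}_{i=1}^n$. Its only computation is the product $aux$ of Equation~\ref{eq:collect2}, which is a deterministic function of the $aux_i$ and so adds nothing. Because the collector and aggregator are non-colluding, the view contains no $c_i$, no tally $R$ and no $sk_A$.

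\textbf{Step 2: show the view does not depend on the ballots.} By Equation~\ref{eq:auxenc}, $aux_i = pk_A^{sk_i} \bmod N^2$, which is a function of $(pk_A, sk_i)$ only. In the setup phase each $sk_i$ is sampled by voter $U_i$ from $[2, N^2)$ before, and independently of, the choice of ballot $x_i$. The aggregator key $sk_A$ is chosen independently of all voters. Hence the joint distribution of
\[
\bigl(N, H, pk_A, aux_1, \ldots, aux_n\bigr)
\]
is the same for every ballot vector $(x_1, \ldots, x_n)$. I would make this precise with a simulator: given only $(N, H, pk_A)$, it samples fresh $sk_i'$ uniformly from $[2, N^2)$ and outputs $pk_A^{sk_i'}$. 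Its output is identically distributed to the real view for any ballots.

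\textbf{Step 3: conclude.} Any adversary's guess about a predicate of $x_i$ from the view is therefore independent of $x_i$. The guarantee is perfect (information-theoretic): the collector's advantage in distinguishing any two ballot vectors is exactly zero.

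\textbf{Main obstacle.} The delicate point is defining the view so that the independence claim is airtight. Two leaks must be excluded. First, the collector must not observe the on-chain ciphertexts. Since the $c_i$ are publicly recorded on Solana, I would state explicitly that Proposition~\ref{prop:collector} concerns the collector's protocol view. If the collector also reads the chain, it holds $c_i$ alongside the matching $aux_i$. That joint view is then reduced to the aggregator-style argument of Proposition~\ref{prop:privacy} plus the Diffie--Hellman-style masking assumption, because $aux_i = (H^{sk_i})^{sk_A}$ with $sk_A$ unknown to the collector. Second, $sk_i$ must be sampled independently of $x_i$, which I would note as part of the setup. With both points stated, the rest is immediate.
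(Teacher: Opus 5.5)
Your proposal is correct and follows the paper's argument: each $aux_i = pk_A^{sk_i}$ depends only on $(pk_A, sk_i)$ and not on $x_i$, so the collector's view is independent of the ballots, and no computational assumption is needed. Your simulator and your explicit requirement that $sk_i$ be sampled independently of $x_i$ make precise what the paper's one-line proof asserts. Your remark about a collector that also reads the on-chain ciphertexts goes beyond the proposition; the paper simply stipulates that the collector never sees a $c_i$.
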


\begin{proof}
Each $aux_i = pk_A^{sk_i} \bmod N^2$ is a function of $(pk_A, sk_i)$ only and does not depend on the ballot $x_i$ in any way. The collector's entire view is therefore independent of the votes, so it learns nothing about them. (The collector never sees a ciphertext $c_i$.)
\end{proof}

\begin{proposition}[Aggregator obliviousness]\label{prop:agg}
The aggregator $A$, seeing $\{c_i\}_{i=1}^n$ and the aggregate $aux$, learns only $S = \sum x_i$ and nothing about individual $x_i$.
\end{proposition}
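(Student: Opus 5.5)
The plan is a simulation argument. The aggregator's view is $V_A = (N, H, pk_A, sk_A, \{c_i\}_{i=1}^n, aux)$. We show it can be produced from $S$ alone, up to computational indistinguishability. The argument has three steps, in the following order.

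\emph{Step 1: the tally is determined.} By Theorems~\ref{thm:product}, \ref{thm:mask} and~\ref{thm:sum}, $S$ is a deterministic function of $V_A$. So the aggregator learns $S$, and this is unavoidable. Moreover, $aux$ is determined by the ciphertexts, $sk_A$, and $S$. Indeed, by Theorem~\ref{thm:product}, $\Pi^{sk_A} = (1+SN)^{sk_A} H^{sk_A\Sigma}$, so
\[aux = \Pi^{sk_A}\,(1+SN)^{-sk_A} \bmod N^2 .\]
Hence, given $S$, the aggregate $aux$ carries no information beyond $\{c_i\}$ and $sk_A$. The simulator therefore needs only to produce ciphertexts and then compute $aux$ by this formula.

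\emph{Step 2: the simulator.} Given $S$, the simulator picks fresh keys $sk'_i$ and a dummy ballot vector with the same sum, for example $x'_1 = S$ and $x'_i = 0$ for $i>1$. It outputs the ciphertexts $c'_i = H^{sk'_i}(1+N)^{x'_i}$ and $aux' = \prod_i pk_A^{sk'_i}$. By Theorem~\ref{thm:mask}, this is consistent with the formula in Step 1.

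\emph{Step 3: hybrid argument.} Let hybrid $j$ replace ballots $x_1,\dots,x_j$ by the dummy ballots $x'_1,\dots,x'_j$. To keep the sum fixed, we work with pairs, or compensate in the final slot. Adjacent hybrids should be indistinguishable by the semantic security of Proposition~\ref{prop:privacy}. That result only covers an adversary who sees $c_i$ alone, however. Here the adversary also holds $sk_A$ and the aggregate $aux$, which involves $H^{sk_A sk_i}$.

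\emph{The main obstacle.} Because $aux$ binds all the masks together, no single $c_i$ is independent of the rest of the view. I would first use Step 1 to eliminate $aux$ entirely, rewriting it as a function of $\{c_i\}$, $sk_A$ and $S$. The reduction then concerns only $n$ ciphertexts under independent masks, together with the known $sk_A$. Since $sk_A$ is independent of the $sk_i$, the reduction can sample $sk_A$ itself. The remaining task is to change two ballots $x_a, x_b$ to $x_a+\delta$ and $x_b-\delta$ while keeping the sum. This needs joint pseudorandomness of the pair $(H^{sk_a}, H^{sk_b})$ under the masking assumption. It also needs DCR for the Paillier layer, which lets us switch the two messages. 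For the security reduction itself we defer to the aggregator-oblivious proofs in \cite{shi2011, joye2013, lem14b}, whose security notion is exactly "nothing beyond the sum."

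We also note what the proof requires of the setting: the aggregator must never receive an individual $aux_i$, and it must not collude with the collector. Either would break the argument, since $c_i^{sk_A}\,aux_i^{-1}$ yields $x_i$ directly.
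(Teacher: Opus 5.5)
Your argument is correct, and it takes a genuinely different and sharper route than the paper's.

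\textbf{The paper's route.} The paper argues informally. It notes that $A$ never sees an individual $aux_i$ and reduces learning $x_i$ directly to breaking the single-ciphertext privacy of Proposition~\ref{prop:privacy}. It then asserts that $aux$ ``encodes only $\Sigma$ in the exponent'' and defers everything else to \cite{shi2011, joye2013, lem14b}. That reduction silently ignores that $A$ also holds $sk_A$ and $aux$, which Proposition~\ref{prop:privacy} does not cover. This is exactly the point you flag.

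\textbf{Your route.} Your key step is the identity $aux \equiv \Pi^{sk_A}(1+SN)^{-sk_A} \pmod{N^2}$, obtained from Theorem~\ref{thm:product}, and it closes that hole. Given $S$, the value $aux$ is a public function of $(\{c_i\}, sk_A)$, so the whole view is simulatable from $S$ plus independently masked ciphertexts. The simulation formulation also gives the right formal meaning of ``learns only $S$'': nothing beyond what $S$ itself implies. The paper's ``nothing about individual $x_i$'' glosses over this, for instance when $n=1$ or the vote is unanimous.

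\textbf{You can push it further.} Let every hybrid compute $aux$ by your formula with the fixed $S$. Then the intermediate hybrids need not be consistent real-world views, since the endpoints still coincide exactly with the real and simulated views. This means you can switch ciphertexts one at a time. The pairwise sum-preserving swaps and the joint-pseudorandomness requirement become unnecessary. Each adjacent pair of hybrids then reduces directly to Proposition~\ref{prop:privacy}, with the reduction sampling $sk_A$ and the other $sk_i$ itself. The appeal to the aggregator-oblivious literature is then needed only to justify Proposition~\ref{prop:privacy}, just as in the paper.

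\textbf{What each buys.} The paper's proof is shorter, but it rests entirely on citation for the interaction between $aux$, $sk_A$ and the ciphertexts. Yours makes that interaction explicit and reduces aggregator obliviousness to single-ciphertext security.
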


\begin{proof}
$A$'s view is $(\{c_i\}, aux)$, where $aux = H^{sk_A \cdot \Sigma}$ is a single group element; $A$ never sees an individual $aux_i$. The protocol yields $R = S$. Learning an individual $x_i$ would require distinguishing $c_i$ from an encryption of another ballot, which contradicts Proposition~\ref{prop:privacy}; the aggregate $aux$ encodes only $\Sigma = \sum sk_i$ in the exponent and exposes no individual $sk_i$. This is the aggregator-oblivious property proven in~\cite{shi2011, joye2013, lem14b}.
\end{proof}

\subsection{Integrity properties}

The architecture provides input integrity through the chain and output integrity through determinism:

\begin{itemize}[leftmargin=*]
    \item \textbf{On-chain immutability.} Each $c_i$ is a finalized Solana transaction and cannot be altered afterward, which gives a permanent audit trail of the exact ciphertext set.
    \item \textbf{Cross-checkable inputs.} The ciphertexts the aggregator reads from the database can be checked against the immutable on-chain record. Any insertion, deletion, or modification of a ballot at the host diverges from the chain and is detectable by any auditor, so host-side tampering with the input set is tamper-evident rather than silent.
    \item \textbf{Deterministic, recomputable tally.} Given $\{c_i\}$ and $aux$, the recovered sum $R$ is a deterministic function of public inputs. Anyone holding the on-chain ciphertexts and $aux$ can recompute $R$ and verify the published result; a dishonest host cannot publish a different valid tally.
    \item \textbf{Stage enforcement.} The on-chain program admits votes only during the Voting stage and candidate changes only during Application.
    \item \textbf{Parameter validation.} Before each run the parameters are checked ($N$ odd, $\gcd(sk_A, N) = 1$, $sk_A$ non-trivial, $N$ of the expected bit length), which rules out the degenerate configurations of Assumptions~\ref{ass:modulus} and~\ref{ass:key}.
\end{itemize}

These checks make tampering \emph{detectable}. Making it \emph{impossible} for a malicious host requires hardware isolation, which we treat as future work (Section~\ref{sec:future}).

\subsection{Limitations}

\begin{enumerate}[leftmargin=*]
    \item \textbf{No range proofs.} Assumption~\ref{ass:binary} is not enforced cryptographically, so a malicious voter could submit an out-of-range value. Zero-knowledge range proofs would close this gap.
    \item \textbf{Trusted setup.} $TP$ knows the factorization of $N$. Distributed key generation would remove this trust.
    \item \textbf{Single aggregator.} $A$ holds $sk_A$ and could withhold a result. Threshold decryption across several aggregators would mitigate this.
    \item \textbf{Collusion.} The privacy analysis assumes the collector and aggregator do not collude. A party holding both an individual $aux_i$ and the matching $c_i$ is outside the analyzed model; the secure-enclave direction of Section~\ref{sec:enclave} also helps here by isolating each role.
    \item \textbf{No coercion resistance.} A voter can reveal $sk_i$ to prove a vote. Receipt-freeness mechanisms~\cite{kiayias2008civitas} would be needed.
    \item \textbf{Local validator.} The test validator is not Byzantine fault-tolerant; mainnet deployment is needed for production security.
\end{enumerate}

\section{Vote Packing}\label{sec:packing}

To tally a multi-candidate election from a single ciphertext per voter, votes are packed into one integer using fixed-width bit slots. For $k$ candidates with $b$ bits per slot,
\begin{equation}
x_i = \sum_{j=1}^{k} v_{i,j} \cdot 2^{b(k-j)} = v_{i,1} \cdot 2^{b(k-1)} + v_{i,2} \cdot 2^{b(k-2)} + \cdots + v_{i,k} \cdot 2^{0}.
\end{equation}
Candidate 1 takes the most significant slot, candidate $k$ the least. The homomorphic sum keeps the slots independent:
\begin{equation}
\sum_{i=1}^n x_i = \sum_{j=1}^{k} \left(\sum_{i=1}^n v_{i,j}\right) \cdot 2^{b(k-j)},
\end{equation}
and extraction is $\mathrm{count}_j = \lfloor R / 2^{b(k-j)} \rfloor \bmod 2^b$. Figure~\ref{fig:packing} shows the layout for the implementation's choice $k = 10$, $b = 25$.

\begin{figure}[h]
\centering
\includegraphics[width=\textwidth]{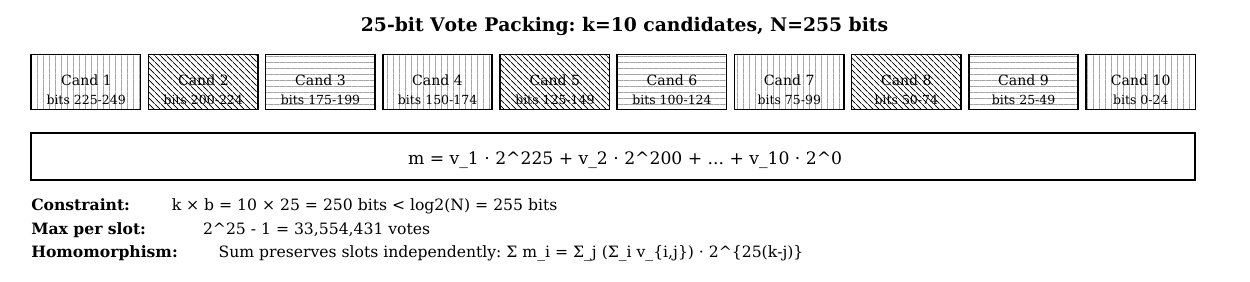}
\caption{Bit packing with the implementation's $b = 25$ and $k = 10$. Each slot is 25 bits, so $10 \times 25 = 250$ bits, below $\lfloor\log_2 N\rfloor = 254$. The maximum count per candidate is $2^{25}-1 = 33{,}554{,}431$.}
\label{fig:packing}
\end{figure}

\subsection{The slot width is a free parameter}\label{sec:packing-b}

The choice $b = 25$ is not fundamental. The only constraint packing imposes is Assumption~\ref{ass:packing}: the packed ballot must fit in $\mathbb{Z}_N$, i.e.\ $k \cdot b \le \lfloor \log_2 N \rfloor$. It is the \emph{product} $k \cdot b$ that the modulus bounds, not $b$ or $k$ individually. For a fixed $N$ one may trade slot width against candidate count: a 255-bit modulus accommodates $10 \times 25$, or $5 \times 50$, or $25 \times 10$, among others. A wider slot raises the per-candidate vote ceiling $2^b - 1$ but leaves room for fewer candidates; a narrower slot does the reverse. The implementation uses $b = 25$ because it balances ten candidates against a 33.5-million ceiling per candidate, which is comfortable for large elections, but any $(k, b)$ with $k \cdot b \le \lfloor \log_2 N \rfloor$ is valid.

\subsection{Why bit-packing matters}\label{sec:packing-adv}

The straightforward alternative is one ciphertext per candidate: voter $U_i$ would send $k$ separate Paillier ciphertexts $c_{i,1}, \ldots, c_{i,k}$. Packing replaces those $k$ ciphertexts with one and removes a factor of $k$ from almost every cost in the system (Figure~\ref{fig:packadv}).

\begin{figure}[h]
\centering
\includegraphics[width=\textwidth]{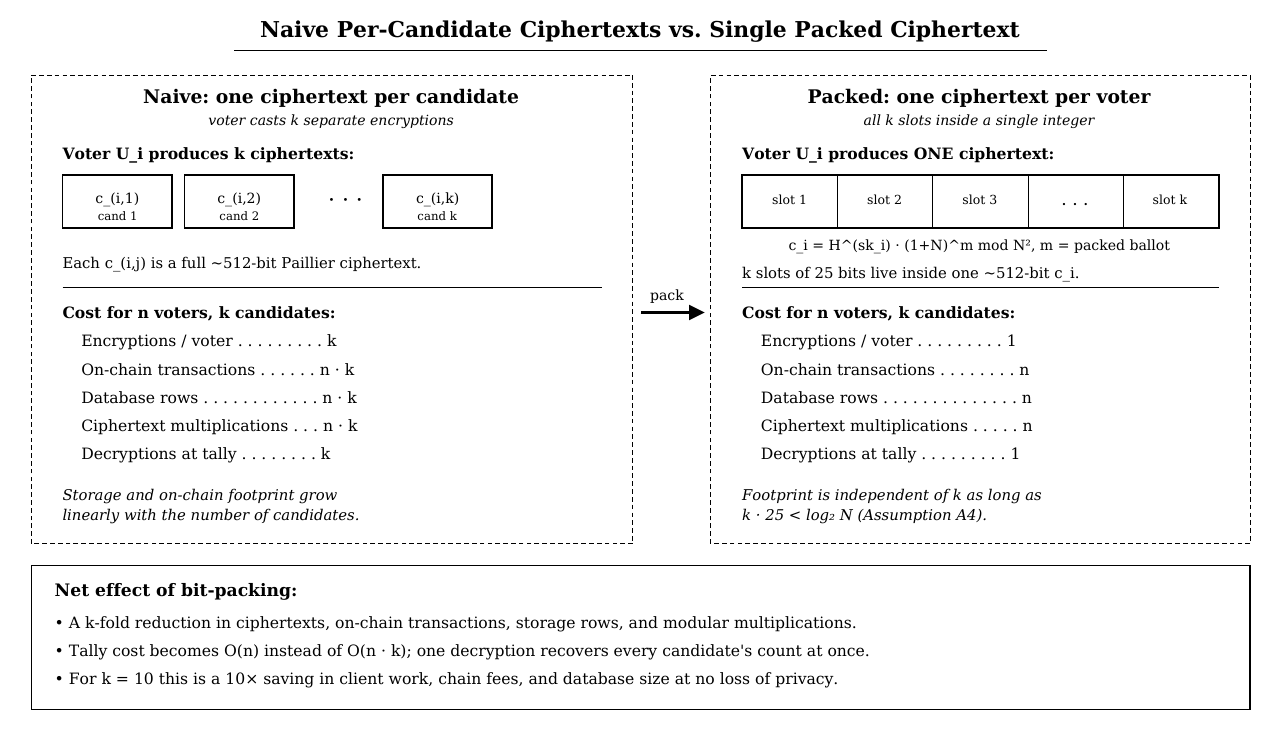}
\caption{Per-candidate ciphertexts versus a single packed ciphertext. Packing fixes the client, on-chain, storage, and tally cost at a level independent of the candidate count $k$, as long as $k \cdot b \le \lfloor\log_2 N\rfloor$.}
\label{fig:packadv}
\end{figure}

\begin{itemize}[leftmargin=*]
    \item \textbf{Client work.} The voter performs one Paillier encryption instead of $k$. On a phone or laptop doing exponentiation modulo the 510-bit $N^2$, this is the difference between one and $k$ of the most expensive operations in the flow.
    \item \textbf{On-chain footprint.} One ciphertext means one transaction and one stored account per voter instead of $k$. On Solana, where reserved account space costs rent, this divides the on-chain cost of an election by $k$.
    \item \textbf{Storage.} The database holds $n$ ciphertext rows instead of $n \cdot k$. At $\sim$405 bytes per voter row (Section~\ref{sec:evaluation}), a 10-candidate election stores about $10\times$ less.
    \item \textbf{Tally cost.} Aggregation multiplies $n$ ciphertexts rather than $n \cdot k$, so the running product is $O(n)$ instead of $O(n \cdot k)$. One decryption then recovers every candidate's count at once, instead of $k$ separate decryptions.
    \item \textbf{No privacy cost.} The packed ciphertext is still a single Paillier ciphertext and remains semantically secure under DCR (Proposition~\ref{prop:privacy}); packing changes only the plaintext encoding, not the hardness.
\end{itemize}

\subsection{Capacity analysis}

For the implementation's $b = 25$ and 255-bit $N$:
\begin{itemize}[leftmargin=*]
    \item \textbf{Maximum candidates:} $k_{\max} = \lfloor \lfloor\log_2 N\rfloor / b \rfloor = \lfloor 254 / 25 \rfloor = 10$.
    \item \textbf{Maximum votes per candidate:} $2^b - 1 = 33{,}554{,}431$.
    \item \textbf{Maximum total ballots:} up to $k_{\max} \times (2^b - 1) \approx 335.5$ million when votes are spread so that no candidate exceeds its slot. A worst-case landslide, in which a single candidate could receive every vote, is bounded by the per-slot ceiling $2^b - 1 = 33.5$ million.
\end{itemize}

For more candidates, or a higher per-candidate ceiling, a larger $N$ is needed (Table~\ref{tab:scaling}).

\begin{table}[h]
\centering
\caption{Modulus size required for a given candidate count, at $b=25$.}
\label{tab:scaling}
\begin{tabular}{rrrr}
\toprule
Candidates ($k$) & Bits needed ($k \times b$) & Min $N$ bits & Min prime size \\
\midrule
10 & 250 & 255 & 128-bit \\
20 & 500 & 505 & 253-bit \\
50 & 1{,}250 & 1{,}255 & 628-bit \\
100 & 2{,}500 & 2{,}505 & 1{,}253-bit \\
\bottomrule
\end{tabular}
\end{table}

\section{Algorithms}\label{sec:algorithms}

\begin{algorithm}[h]
\caption{Voter encryption (client-side, WebAssembly)}
\label{alg:encrypt}
\begin{algorithmic}[1]
\Require Vote vector $\mathbf{v} = [v_1, \ldots, v_k] \in \{0,1\}^k$; public params $(N, H)$; aggregator public key $pk_A$; slot width $b$
\Ensure Ciphertext $c$; auxiliary value $aux$

\Statex \textit{Assumptions:} \ref{ass:packing} ($k \cdot b \le \lfloor\log_2 N\rfloor$), \ref{ass:binary} ($v_j \in \{0,1\}$), \ref{ass:keyrand} ($sk$ reused as randomness $r$)

\State $sk \xleftarrow{\$} [2, N^2)$ \Comment{secret key, also the masking randomness}
\State $m \gets 0$
\For{$j = 1$ to $k$}
    \State $m \gets m + v_j \cdot 2^{b(k-j)}$ \Comment{pack votes in reverse slot order}
\EndFor
\State $c \gets \textsc{ModExp}(H,\, sk,\, N^2) \cdot \textsc{ModExp}(1{+}N,\, m,\, N^2) \bmod N^2$
\State $aux \gets \textsc{ModExp}(pk_A,\, sk,\, N^2)$
\State \Return $(c,\; aux)$
\end{algorithmic}
\end{algorithm}

\begin{algorithm}[h]
\caption{Vote aggregation (server-side, native libtommath)}
\label{alg:aggregate}
\begin{algorithmic}[1]
\Require Ciphertexts $\{c_1, \ldots, c_n\}$; collected auxiliary $aux$; params $(N, H, sk_A)$; candidate count $k$; slot width $b$
\Ensure Vote counts $[\mathrm{count}_1, \ldots, \mathrm{count}_k]$

\Statex \textit{Assumptions:} \ref{ass:modulus} ($N{=}pq$, odd primes), \ref{ass:key} ($\gcd(sk_A,N){=}1$), \ref{ass:sum} ($S{<}N$)

\State $N^2 \gets N \cdot N$
\State $sk_A^{-1} \gets \textsc{ModInv}(sk_A \bmod N,\; N)$ \Comment{exists by Assumption~\ref{ass:key}}

\Statex \Comment{\textbf{Step 1: product of ciphertexts (Theorem~\ref{thm:product})}}
\State $\Pi \gets 1$
\For{$i = 1$ to $n$}
    \State $\Pi \gets \Pi \cdot c_i \bmod N^2$ \Comment{\texttt{mp\_mul} + \texttt{mp\_mod}}
\EndFor

\Statex \Comment{\textbf{Step 2: exponentiate by $sk_A$}}
\State $P \gets \textsc{ModExp}(\Pi,\, sk_A,\, N^2)$ \Comment{\texttt{mp\_exptmod}}

\Statex \Comment{\textbf{Step 3: cancel mask (Theorem~\ref{thm:mask})}}
\State $aux^{-1} \gets \textsc{ModInv}(aux,\, N^2)$ \Comment{exists by Assumption~\ref{ass:modulus}}
\State $P' \gets P \cdot aux^{-1} \bmod N^2$

\Statex \Comment{\textbf{Step 4: $L$-function}}
\State $L \gets (P' - 1) \;\mathbf{div}\; N$ \Comment{exact integer division}

\Statex \Comment{\textbf{Step 5: recover sum (Theorem~\ref{thm:sum})}}
\State $R \gets L \cdot sk_A^{-1} \bmod N$ \Comment{now $R = S = \sum x_i$}

\Statex \Comment{\textbf{Step 6: unpack votes (Corollary~\ref{cor:votes})}}
\For{$j = 1$ to $k$}
    \State $\mathrm{count}_j \gets (R \gg b(k-j)) \;\&\; (2^{b}-1)$ \Comment{right-shift + bitmask}
\EndFor
\State \Return $[\mathrm{count}_1, \ldots, \mathrm{count}_k]$
\end{algorithmic}
\end{algorithm}

\section{Proof-of-Concept Implementation}\label{sec:implementation}

The architecture is implemented as a running, open-source proof-of-concept that validates the design.\footnote{Source code: \url{https://github.com/Akshit11318/narad}} The numbers reported below, and in the appendix, are taken from this deployment rather than chosen by hand.

\subsection{Implementation architecture}\label{sec:arch}

Each role of Section~\ref{sec:model} maps to a software component (Figure~\ref{fig:architecture}). Client-side cryptography runs in the browser; the collector and aggregator run in the backend, which calls a native C module for the heavy arithmetic; PostgreSQL and Solana provide storage and an audit trail.

\begin{figure}[h]
\centering
\includegraphics[width=\textwidth]{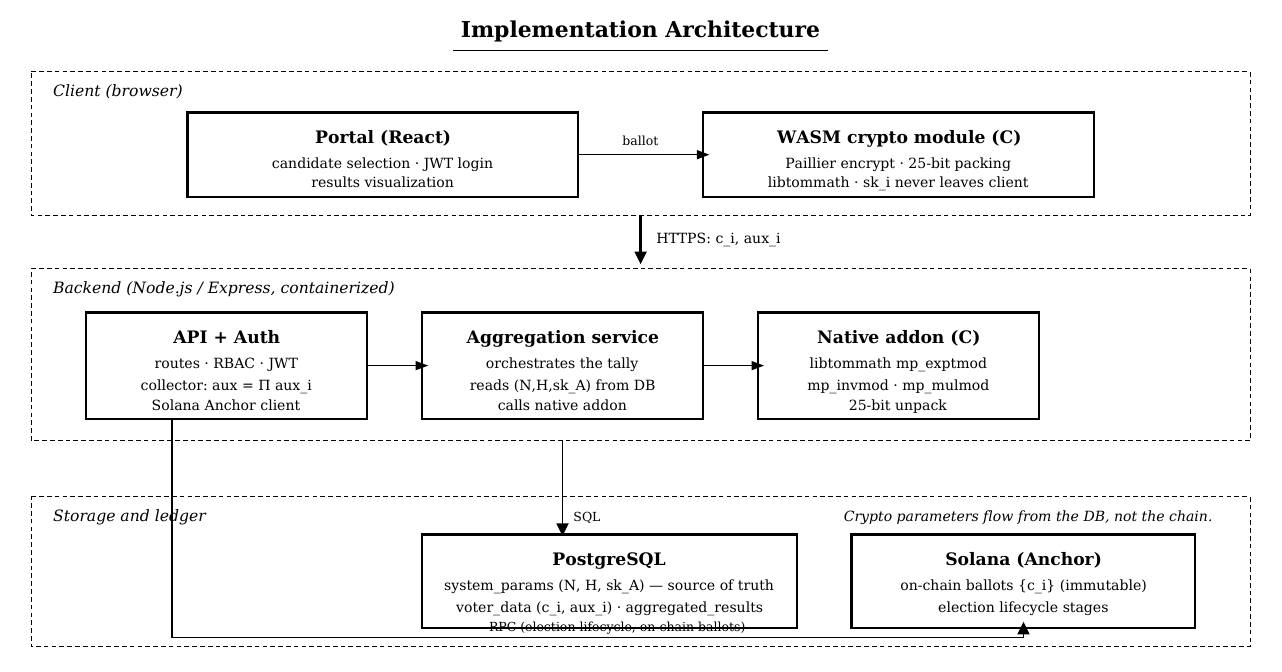}
\caption{Implementation architecture. Client-side encryption runs in the browser through WebAssembly. The collector and aggregator both run in the backend, which delegates the modular arithmetic to a native C addon. PostgreSQL is the single source of truth for the Paillier parameters; Solana provides an immutable ballot record.}
\label{fig:architecture}
\end{figure}

\paragraph{Voter (browser + WASM).} All cryptography runs client-side in a WebAssembly module compiled from C with Emscripten and linked against libtommath. The plaintext $x_i$ and the secret key $sk_i$ never leave the browser. The module exports \texttt{generate\_secret\_key} ($sk_i$), \texttt{pack\_votes} (the packed integer $m$), \texttt{encrypt\_vote} ($c_i = H^{sk_i}(1+N)^{m} \bmod N^2$), and \texttt{compute\_aux} ($aux_i = pk_A^{sk_i} \bmod N^2$).

\paragraph{Collector.} A backend service that multiplies the auxiliary values into $aux = \prod aux_i \bmod N^2$. It can be split into a separate party where stronger separation of duties is required.

\paragraph{Aggregator (native C / libtommath).} Runs the tally of Section~\ref{sec:protocol}. Every modular exponentiation and inversion happens in a native C addon linked against the same libtommath used in the WASM client, so the two sides agree bit-for-bit.

\paragraph{Blockchain (Solana).} An Anchor program records each $c_i$ as an append-only transaction and enforces the election lifecycle (application $\to$ voting $\to$ closed). It does not participate in aggregation; it provides immutability, not computation.

\paragraph{Database (PostgreSQL).} The \texttt{system\_params} table holds $(N, H, sk_A)$ and is the single source of truth for the Paillier parameters. The \texttt{voter\_data} table holds the per-voter $c_i$ and $aux_i$, and the \texttt{aggregated\_results} table holds the tally.

\paragraph{Design rationale.} Encryption runs in WebAssembly because libtommath's \texttt{mp\_exptmod} executes exponentiation modulo the 510-bit $N^2$ at near-native speed in the browser sandbox, several times faster than JavaScript \texttt{BigInt}. Aggregation runs in a native addon because the tally is $n$ modular multiplications plus one large exponentiation, where native C clearly outperforms \texttt{BigInt} at scale. Parameters live in PostgreSQL rather than on-chain so a setup mistake can be corrected without redeploying an immutable program; the chain still records every ballot for an audit trail. The blockchain is included because homomorphic aggregation alone gives no audit trail: recording each $c_i$ on-chain lets any observer confirm that the ciphertext set used in the tally matches what was submitted.

\subsection{Technology stack}

\begin{itemize}[leftmargin=*]
    \item WASM: C + Emscripten 4.0.6 + libtommath
    \item Backend: Node.js 20, Express, Prisma ORM
    \item Native addon: C + Node-API, cmake-js, libtommath
    \item Blockchain: Solana (Anchor 0.30.1), test validator
    \item Database: PostgreSQL 16
    \item Frontend: React 19, Tailwind CSS
    \item Deployment: Docker Compose (four containers: backend, postgres, solana, portal)
\end{itemize}

\subsection{Concrete parameters}

The running system uses the following 255-bit parameters, read directly from the \texttt{system\_params} table through the aggregator parameter endpoint. The same values drive the evaluation in Section~\ref{sec:evaluation} and the worked example in Appendix~\ref{sec:example}.

\begin{lstlisting}
N   = 400000000000000000000000211d1b86000000000000000001b131395144546b
H   = 2e0f69afca2410fda28718e5623a7a755531ae6dd30a286ec6737b8b2a6a7b61
skA = 2481c95a9eb72624f93de30ccd1118fbccfb637cd367ad167433a866751833b
\end{lstlisting}

\noindent Here $N$ is a 255-bit odd modulus (Assumption~\ref{ass:modulus}), and $sk_A$ is coprime to $N$ and distinct from $H$ (Assumption~\ref{ass:key}). These checks run at the start of every test.

\subsection{Data storage strategy}

Ciphertexts $c_i$ and auxiliary values $aux_i$ are stored both on Solana (for immutability) and in PostgreSQL (for aggregation). The Paillier parameters live only in PostgreSQL: \texttt{system\_params} is the single source of truth. PostgreSQL also handles voter authentication (bcrypt password hashes, JWT tokens); the blockchain handles the election lifecycle and ballot recording.

\subsection{Engineering issues resolved}

Bringing the protocol from paper to a working system surfaced several concrete bugs, each tied to an assumption or to a low-level detail:

\begin{enumerate}[leftmargin=*]
    \item \textbf{Even modulus (Assumption~\ref{ass:modulus}):} a hardcoded $N$ was even. Fixed by generating $N = pq$ with odd primes.
    \item \textbf{Degenerate key (Assumption~\ref{ass:key}):} $sk_A$ was set equal to $H$. Fixed with a separate random $sk_A$.
    \item \textbf{Parameter source mismatch:} aggregation read $N$ from a stale on-chain value. Fixed by reading from the database.
    \item \textbf{Packing mismatch:} the C path used 25-bit slots while a JavaScript fallback used 32-bit. Unified to 25-bit.
    \item \textbf{Missing $sk_A^{-1}$ step:} the JavaScript path omitted $L(P') \cdot sk_A^{-1}$. Added.
    \item \textbf{GCD pre-check bug:} a faulty check rejected a valid $sk_A$. Removed the pre-check in favor of a direct \texttt{mp\_invmod}.
    \item \textbf{Byte order:} a native binding emitted big-endian where the rest of the code expected little-endian. Added the reversal.
\end{enumerate}

The byte-order and parameter-source issues in particular show the gap between a clean protocol and a correct implementation: the mathematics was right long before the bytes lined up.

\section{Evaluation}\label{sec:evaluation}

\subsection{Methodology}

Measurements were taken on a 12th-generation Intel Core i7-12700H (20 hardware threads) with 16~GB of RAM, running the four-container Docker Compose stack. The test harness reads the live parameters $(N, H, sk_A)$ from the backend, encrypts ballots in Python with the same Paillier formulas as the WASM client, bulk-loads them into PostgreSQL with \texttt{COPY}, and then drives the real backend endpoints for the collector and aggregator steps. The aggregation timings are therefore the native libtommath tally, not a re-implementation. Each run checks the recovered per-candidate counts against the counts generated during encryption.

\subsection{Correctness and performance}

Table~\ref{tab:results} reports five runs at different scales. In every run the recovered counts matched the expected counts exactly. The encryption column reflects the Python harness standing in for the client; the aggregation column is the native tally and is the figure of interest for server cost.

\begin{table}[h]
\centering
\caption{Measured results on the parameters of Section~\ref{sec:implementation}. ``Encrypt'' is harness-side Paillier encryption plus bulk insert; ``Aux'' is the collector product; ``Aggregate'' is the native libtommath tally. All runs recovered exact counts.}
\label{tab:results}
\begin{tabular}{rrrrrrc}
\toprule
Voters & Cand. & Encrypt (s) & Aux (s) & Aggregate (s) & DB size & Correct \\
\midrule
100      & 10 & 0.07  & 0.007 & 0.018 & 39 KB   & yes \\
1{,}000  & 5  & 0.44  & 0.012 & 0.041 & 394 KB  & yes \\
5{,}000  & 5  & 2.02  & 0.039 & 0.104 & 1.9 MB  & yes \\
10{,}000 & 10 & 4.17  & 0.052 & 0.189 & 3.9 MB  & yes \\
50{,}000 & 10 & 21.77 & 0.199 & 0.815 & 19.4 MB & yes \\
\bottomrule
\end{tabular}
\end{table}

The native tally scales linearly with the number of ballots, from 18~ms at 100 voters to 0.82~s at 50{,}000, since the running product is $O(n)$ and the single decryption that follows is $O(1)$ in the voter count. Effective aggregation throughput rises with scale as the fixed setup cost amortizes, from about 5{,}400 ballots/s at 100 voters to about 61{,}000 ballots/s at 50{,}000. Per-voter storage is steady at roughly 405 bytes (two $\sim$512-bit hex values plus metadata), so a 50{,}000-ballot election occupies about 19~MB.

\subsection{Capacity}

With $b = 25$ and the 255-bit modulus, each candidate slot holds up to $2^{25}-1 = 33{,}554{,}431$ votes, and ten candidates give a total ceiling near 335 million ballots, bounded by Assumptions~\ref{ass:packing} and~\ref{ass:binary}.

\section{Scaling and Future Directions}\label{sec:future}

The same design accommodates much larger or smaller elections by turning a few independent dials, and it extends naturally to richer ballot formats and stronger integrity. Figure~\ref{fig:scaling} summarizes the levers.

\begin{figure}[h]
\centering
\includegraphics[width=\textwidth]{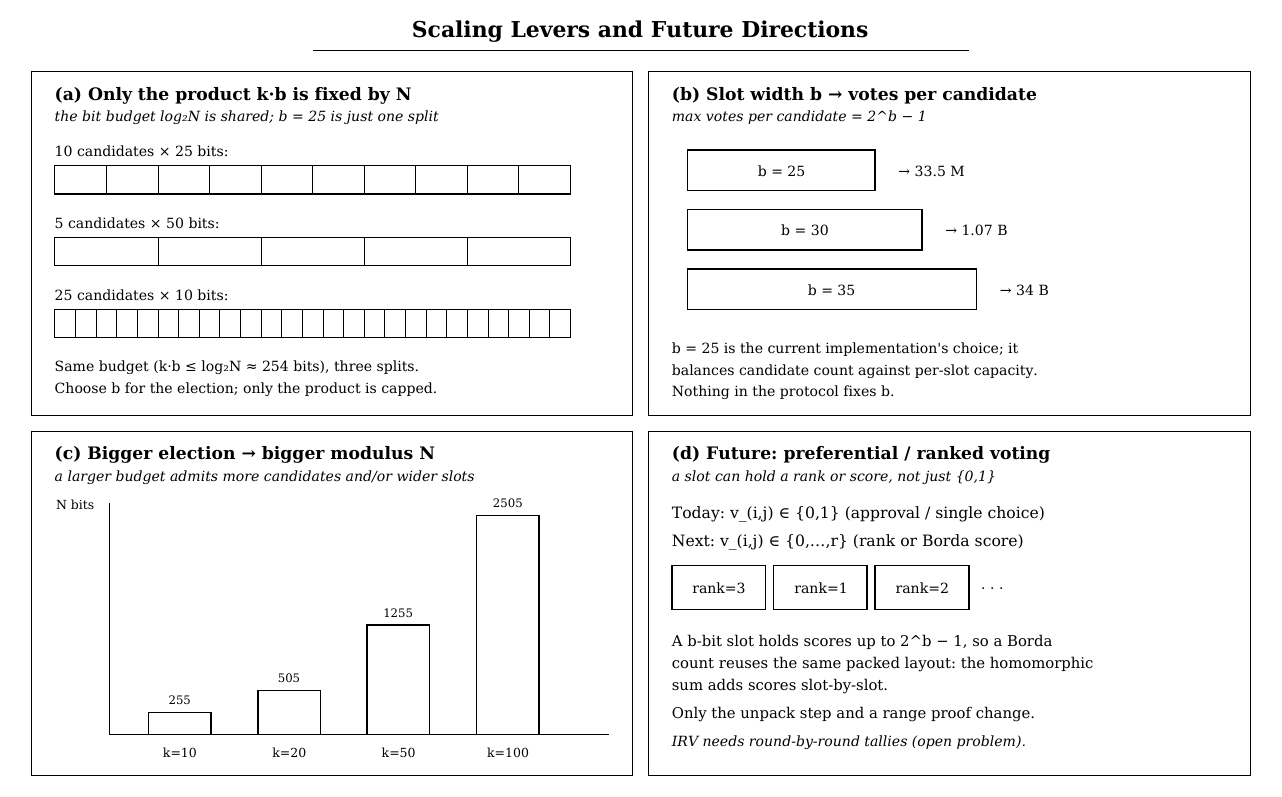}
\caption{Scaling levers and future directions: (a) for a fixed modulus only the product $k \cdot b$ is bounded, so the same bit budget can be split into many narrow slots or few wide ones; (b) wider slots admit more votes per candidate; (c) a larger modulus admits more candidates and/or wider slots; (d) preferential voting reuses the same packed layout with non-binary slot values.}
\label{fig:scaling}
\end{figure}

\subsection{More or fewer candidates}

The candidate count is bounded by $k \le \lfloor \lfloor\log_2 N\rfloor / b \rfloor$. With the current 255-bit modulus and $b = 25$ this is 10. Doubling to 20 candidates needs about a 505-bit modulus; 100 candidates need about 2{,}505 bits (Table~\ref{tab:scaling}). Raising $N$ costs more per modular operation but changes nothing else in the protocol or the proofs. A small election with two or three candidates can keep the 255-bit modulus and leave the unused high slots at zero, or move to a smaller modulus for faster arithmetic.

\subsection{More or fewer voters}

Two dials control voter capacity. The first is the slot width $b$: each slot holds up to $2^b - 1$ votes per candidate, so $b = 25$ gives 33.5 million, $b = 30$ gives about 1.07 billion, and $b = 35$ gives about 34 billion. Widening slots uses more of the bit budget, so the product $k \cdot b$ must still satisfy Assumption~\ref{ass:packing}; in practice one chooses $(k, b)$ to fit the target election. The second dial is the modulus size itself, which sets the budget $\lfloor\log_2 N\rfloor$ shared between $k$ and $b$. The tally remains $O(n)$ regardless of these choices.

\subsection{On-chain cost versus capacity}

On Solana, accounts pay rent for the space they reserve, and each ciphertext is an element of $\mathbb{Z}_{N^2}^*$, roughly $2\lceil|N|/8\rceil \approx 64$ bytes for a 255-bit $N$. The size of an election therefore trades directly against its on-chain cost: a larger rent budget reserves larger accounts and records more voters and candidates on-chain, with no change to the off-chain database or the homomorphic tally. An operator willing to pay more for stronger immutability can scale the on-chain portion up; one who needs only a lightweight audit trail can keep on-chain accounts small and lean on the database. The cryptographic protocol is identical in both cases.

\subsection{Preferential and ranked voting}

The packing scheme already admits non-binary slot values. Today $v_{i,j} \in \{0,1\}$ encodes approval or a single choice. Allowing $v_{i,j} \in \{0, \ldots, r\}$ encodes a rank or a Borda-style score, and the homomorphic sum still adds the values slot by slot, so a Borda count needs no new ciphertext layout: a $b$-bit slot holds scores up to $2^b - 1$. Only the unpack step and the validity (range) proof change. Fully ranked methods such as instant-runoff need round-by-round retallying of partial orders, which the additive scheme does not capture directly; supporting them efficiently is an open problem.

\subsection{Tamper-resistance through secure enclaves}\label{sec:enclave}

The integrity checks of Section~\ref{sec:security} make tampering \emph{detectable}: the on-chain record fixes the ciphertext set, and the tally is a deterministic, publicly recomputable function of it, so a host cannot silently alter inputs or forge a result. They do not, however, make tampering \emph{impossible}: a malicious host operator could still manipulate the in-memory computation of the collector or aggregator, or selectively process inputs, and would only be caught after the fact by an auditor who recomputes.

Running the collector inside a hardware secure enclave (for example Intel SGX, AMD SEV-SNP, or AWS Nitro Enclaves) would close this gap. The collector is a natural candidate: it touches every $aux_i$ and must faithfully compute $aux = \prod aux_i \bmod N^2$ without dropping or substituting values. Inside an enclave its code and memory are isolated from the host, including a root operator, and remote attestation lets any party verify the exact binary that is running before trusting its output. Extending the same treatment to the aggregator would additionally seal $sk_A$ and guarantee a faithful decryption. Combined with the on-chain audit trail, enclaves would move the system from tamper-evidence to tamper-resistance: no party, not even the machine owner, could read intermediate state or alter the votes, and anyone could check, via attestation, that the published tally came from the unmodified protocol.

Enclaves also address the collusion limitation of Section~\ref{sec:security}. Ballot privacy depends on the collector and aggregator not pooling their views: the collector's individual values $\{aux_i\}$ together with the aggregator's ciphertexts $\{c_i\}$ would expose the per-voter masking exponents and break privacy. In a single-operator deployment, where both roles run on the same infrastructure, only operational discipline keeps the two apart. Placing each role in its own attested enclave makes that separation cryptographic rather than procedural: each enclave ingests only its designated inputs (the collector the $aux_i$, the aggregator the $c_i$ and the aggregate $aux$), computes only its defined function, and emits only its defined output, with attestation proving it does nothing more. A host with root access can neither read an enclave's memory nor reroute the individual $aux_i$ into the aggregator, so collusion across the two roles becomes infeasible even when one party owns all the hardware. This is the strongest reason to favor enclaves here: they convert the non-collusion assumption from a trust requirement into a hardware-enforced guarantee.

\subsection{Threshold aggregation across multiple aggregators}\label{sec:threshold}

The current design uses a single aggregator holding $sk_A$. A malicious aggregator can withhold results (though it cannot forge them). Threshold decryption distributes $sk_A$ across $t$ parties using $(k, t)$-secret sharing, so no single party can decrypt alone.

In a Paillier threshold scheme~\cite{damgard2001}, the key $sk_A$ is split into shares $sk_{A,1}, \ldots, sk_{A,t}$ via Shamir secret sharing over $\mathbb{Z}_{\lambda}$ (where $\lambda = \mathrm{lcm}(p-1, q-1)$). Each aggregator $A_j$ computes a partial decryption $P_j = \Pi^{sk_{A,j}} \bmod N^2$ from the ciphertext product $\Pi$. The partial decryptions are then combined via Lagrange interpolation to recover $P = \Pi^{sk_A} \bmod N^2$, after which the rest of the protocol (mask cancellation, $L$-function, sum recovery) proceeds as before.

This has two benefits. \emph{First}, it removes the single-aggregator availability bottleneck: as long as $k$ of $t$ aggregators participate, the tally can proceed. \emph{Second}, it hardens the collector's position: with a single aggregator, the collector forwards $aux$ to one party who then knows both $\{c_i\}$ and $aux$. With threshold aggregation, each aggregator sees only $\Pi$ (the public ciphertext product) and its own partial decryption share; no single aggregator ever holds enough information to cancel the mask alone, so the collector's $aux$ is distributed across parties rather than handed to one. This makes it harder for any single host to link individual ciphertexts to auxiliary values, strengthening the non-collusion property even without enclaves.

The cost is a more complex setup (shared key generation) and a slightly taller tally (one Lagrange interpolation step). For elections requiring high integrity, the trade-off is favorable.

\subsection{Further work}

Beyond the directions above, the priorities are: integrating the zero-knowledge range proofs described in Section~\ref{sec:related} to enforce vote validity (Assumption~\ref{ass:binary}); implementing the threshold aggregation of Section~\ref{sec:threshold} to remove single-aggregator trust; distributed key generation to remove the trusted setup entirely; and a Solana mainnet deployment with formal verification of the on-chain program.

\section{Conclusion}\label{sec:conclusion}

We described an architecture for privacy-preserving blockchain voting based on Paillier homomorphic encryption. The central idea is that each voter reuses a self-generated secret key as both the masking randomness and the auxiliary exponent, which lets the aggregator cancel the random masks without any key coordination. We proved correctness under six explicit assumptions and analyzed privacy under DCR. A bit-packing scheme places a full multi-candidate ballot in one ciphertext and removes a factor of $k$ from client work, on-chain footprint, storage, and tally cost; its slot width is a free parameter, with only the product $k \cdot b$ bounded by the modulus. With $b = 25$ and a 255-bit modulus the design supports ten candidates and on the order of 335 million ballots, and the proof-of-concept tallies 50{,}000 ballots exactly in under a second using native libtommath arithmetic. The same design scales to larger elections by enlarging the modulus or the slot width, trades on-chain cost against capacity through Solana rent, extends to score-based preferential voting within the existing packing model, and can be made tamper-resistant by running the collector and aggregator inside attested secure enclaves.

\appendix
\section{Worked Example on the Real Parameters}\label{sec:example}

This appendix runs the full protocol for 3 voters and 2 candidates using the actual system parameters of Section~\ref{sec:implementation}. Every value below is computed, not invented; the script in \texttt{figures/} (\texttt{worked\_example.py}) reproduces it and asserts the final result. Large group elements are shown abbreviated as \texttt{head\dots tail} in hexadecimal; the small plaintexts and the recovered sum are exact.

\subsection{Setup}

The published parameters are the real $N$, $H$, and $sk_A$:
\begin{lstlisting}
N   = 400000000000000000000000211d1b86000000000000000001b131395144546b
H   = 2e0f69afca2410fda28718e5623a7a755531ae6dd30a286ec6737b8b2a6a7b61
skA = 2481c95a9eb72624f93de30ccd1118fbccfb637cd367ad167433a866751833b
\end{lstlisting}
\noindent The aggregator publishes $pk_A = H^{sk_A} \bmod N^2 = \texttt{90101d091d\dots d2da8b7c}$. The three voters fix the secret keys (random in the real system, fixed here for reproducibility):
\begin{lstlisting}
sk1 = 1f2e3d4c5b6a79887766554433221100ffeeddccbbaa99887766554433221101
sk2 = 0a1b2c3d4e5f60718293a4b5c6d7e8f90112233445566778899aabbccddeeff1
sk3 = 123456789abcdef0fedcba98765432100123456789abcdef0fedcba987654321
\end{lstlisting}

\subsection{Encryption}

With $b = 25$, the ballots and their packings are
\[U_1 \to \text{cand 1}: \mathbf{v}_1 = [1,0],\ m_1 = 2^{25} = 33{,}554{,}432; \quad U_2 \to \text{cand 2}: \mathbf{v}_2 = [0,1],\ m_2 = 1;\]
\[U_3 \to \text{cand 1}: \mathbf{v}_3 = [1,0],\ m_3 = 2^{25} = 33{,}554{,}432.\]
The expected tally is candidate 1 $= 2$, candidate 2 $= 1$. Each voter computes $c_i = H^{sk_i}\,(1+N)^{m_i} \bmod N^2$ and $aux_i = pk_A^{sk_i} \bmod N^2$:
\begin{lstlisting}
c1   = 5240f0f7e2...0c56b81c      aux1 = 7a3776619b...e818e3b9
c2   = 4196c34735...4270970f      aux2 = 5d8eb1c992...175f09c5
c3   = f7e600014f...a33897e7      aux3 = c7f87d7ef6...84da6db1
\end{lstlisting}

\subsection{Collection}

The collector multiplies the auxiliary values modulo $N^2$:
\[aux = aux_1 \cdot aux_2 \cdot aux_3 \bmod N^2 = \texttt{66153be6a0\dots 8860c768}.\]

\subsection{Aggregation}

The aggregator runs the five steps of Section~\ref{sec:protocol}:
\begin{align*}
\Pi  &= c_1 c_2 c_3 \bmod N^2 = \texttt{da1145ab6e\dots ab5271d6} \\
P    &= \Pi^{sk_A} \bmod N^2 = \texttt{452b3db4cb\dots fa66bac3} \\
P'   &= P \cdot aux^{-1} \bmod N^2 = \texttt{63bf297ccc\dots 0b1ae769} \\
L    &= (P' - 1)/N = (sk_A \cdot S) \bmod N = 1127918153\dots 070612024 \quad (\text{77 digits}) \\
R    &= L \cdot sk_A^{-1} \bmod N = 67{,}108{,}865
\end{align*}

\subsection{Result}

The recovered sum is $R = 67{,}108{,}865 = 2 \cdot 2^{25} + 1$, exactly the sum of the packed plaintexts $m_1 + m_2 + m_3$. Unpacking the two 25-bit slots gives
\[\mathrm{count}_1 = \lfloor R / 2^{25} \rfloor \bmod 2^{25} = 2, \qquad \mathrm{count}_2 = R \bmod 2^{25} = 1,\]
which matches the expected tally of two votes for candidate 1 and one for candidate 2. The script verifies $R = m_1 + m_2 + m_3$ and $[\mathrm{count}_1, \mathrm{count}_2] = [2, 1]$, both of which hold.

This is the same arithmetic the native aggregator performs at scale. In the 100-voter, 10-candidate run of Section~\ref{sec:evaluation}, the recovered counts were
\begin{lstlisting}
Alice 7  Bob 11  Carol 8  Dave 15  Eve 7
Frank 10  Grace 10  Heidi 7  Ivan 11  Judy 14   (total 100, all exact)
\end{lstlisting}



\begin{thebibliography}{99}

\bibitem{paillier1999public}
P.~Paillier, ``Public-key cryptosystems based on composite degree residuosity classes,'' in \emph{Proc. EUROCRYPT}, 1999, pp.~223--238.

\bibitem{adida2008helios}
B.~Adida, ``Helios: Web-based open-audit voting,'' in \emph{Proc. USENIX Security Symposium}, 2008, pp.~335--348.

\bibitem{kiayias2008civitas}
A.~Kiayias, M.~Korman, and D.~Walluck, ``Voting without self-enforcing protocols: A survey of e-voting schemes,'' in \emph{Proc. ACNS}, 2008.

\bibitem{shi2011}
E.~Shi, T.-H.~H. Chan, E.~Rieffel, R.~Chow, and D.~Song, ``Privacy-preserving aggregation of time-series data,'' in \emph{Proc. NDSS}, 2011.

\bibitem{joye2013}
M.~Joye and B.~Libert, ``A scalable scheme for privacy-preserving aggregation of time-series data,'' in \emph{Proc. Financial Cryptography and Data Security (FC)}, 2013, pp.~111--125.

\bibitem{lem14b}
I.~Leontiadis, K.~Elkhiyaoui, and R.~Molva, ``Private and dynamic time-series data aggregation with trust relaxation,'' in \emph{Proc. Cryptology and Network Security (CANS)}, 2014.

\bibitem{popets2023}
M.~Mansouri, M.~\"Onen, W.~Ben~Jaballah, and M.~Conti, ``SoK: Secure aggregation based on cryptographic schemes for federated learning,'' \emph{Proceedings on Privacy Enhancing Technologies (PoPETs)}, vol.~2023, no.~1, pp.~140--157, 2023.

\bibitem{turbo2021}
J.~So and B.~G\"uler, ``Turbo-Aggregate: Breaking the quadratic aggregation barrier in secure federated learning,'' \emph{IEEE Journal on Selected Areas in Information Theory}, vol.~2, no.~1, pp.~479--489, 2021.

\bibitem{libtommath}
``libtommath: A portable number theoretic multiple-precision integer library,'' [Online]. Available: \url{https://github.com/libtom/libtommath}

\bibitem{emscripten}
``Emscripten: LLVM to WebAssembly compiler,'' [Online]. Available: \url{https://emscripten.org}

\bibitem{anchor}
``Anchor: Solana Sealevel Framework,'' [Online]. Available: \url{https://github.com/coral-xyz/anchor}

\bibitem{solana}
``Solana: A fast, secure, and censorship-resistant blockchain,'' [Online]. Available: \url{https://solana.com}

\bibitem{docker}
``Docker: Containerized application platform,'' [Online]. Available: \url{https://docker.com}

\bibitem{pedersen1991}
T.~P. Pedersen, ``Non-interactive and information-theoretic secure verifiable secret sharing,'' in \emph{Proc. CRYPTO}, 1991, pp.~129--140.

\bibitem{schnorr1991}
C.~P. Schnorr, ``Efficient signature generation by smart cards,'' \emph{Journal of Cryptology}, vol.~4, no.~3, pp.~161--174, 1991.

\bibitem{bunz2018bulletproofs}
B.~B\"unz, S.~Bootle, D.~Boneh, A.~Poelstra, P.~Wuille, and G.~Maxwell, ``Bulletproofs: Short proofs for confidential transactions and more,'' in \emph{Proc. IEEE Symposium on Security and Privacy (S\&P)}, 2018, pp.~315--334.

\bibitem{camenisch1997}
J.~Camenisch and M.~Stadler, ``Proof systems for general statements about discrete logarithms,'' Tech. Report TR~260, Dept. of Computer Science, ETH Z\"urich, 1997.

\bibitem{damgard2001}
I.~Damg\aa rd and M.~Jurik, ``A generalisation, a simplification and some applications of Paillier's probabilistic public-key system,'' in \emph{Proc. PKC}, 2001, pp.~119--136.

\bibitem{sgaxe2020}
S.~van~Schaik, A.~Kwong, D.~Genkin, and Y.~Yarom, ``SGAxe: How SGX fails in practice,'' 2020, [Online]. Available: \url{https://sgaxeattack.com/}

\bibitem{lvi2020}
J.~Bulck, M.~Minkin, O.~Weisse, D.~Genkin, B.~Kasikci, F.~Piessens, M.~Silberstein, T.~Witchel, and Y.~Yarom, ``LVI: Hijacking transient execution through microarchitectural data injection,'' in \emph{Proc. IEEE Symposium on Security and Privacy (S\&P)}, 2020.

\end{thebibliography}
\end{document}